\pgfplotsset{compat=1.18}
\newtheorem{theorem}{\color{black}\indent \textbf{Theorem}}[section]
\newtheorem{proposition}{\color{black}\indent Proposition}[section]
\newtheorem{definition}{\color{black}\indent Definition}[section]
\newtheorem{remark}{\color{black}\indent Remark}[section]
\newcommand{\ii}{\iota}
\title{Integration on $q$-Cosymplectic Manifolds}
\author{%
  \parbox{\linewidth}{\centering
    Melvin Leok$^{a}$, Cristina Sardón$^{b}$, Xuefeng Zhao$^{c}$\\[1ex]
    $^{a}$Department of Mathematics, University of California, San Diego, 
    9500 Gilman Drive, Dept.\ 0112, La Jolla, CA 92093-0112, USA\\
    $^{b}$Department of Applied Mathematics, Universidad Politécnica de Madrid, 
    Av.\ Juan de Herrera 6, 28040, Madrid, Spain\\
    $^{c}$College of Mathematics, Jilin University, Changchun, 130012, P.R.\ China\\[1ex]
    \texttt{mleok@ucsd.edu}, \texttt{mariacristina.sardon@upm.es}, \texttt{zhaoxuef@jlu.edu.cn}
  }%
}
\begin{document}
\maketitle
	\begin{abstract}
			{This paper presents a unified framework for studying dynamics and integration on \( q \)-cosymplectic manifolds. After outlining the geometric foundations of \( q \)-cosymplectic structures, we derive new results concerning integrable systems and the characterization of Liouville coordinates, and further investigate the Lie integrability of $q$-evolution systems in this setting. We then develop a Hamilton--Jacobi theory tailored to multi-time Hamiltonian systems, both from an intrinsic geometric perspective and via symplectification techniques. To illustrate the applicability of the framework, we construct a \( q \)-cosymplectic Hamiltonian model for an extended FitzHugh--Nagumo system, providing a biologically relevant example involving three distinct temporal scales.}
		\end{abstract}
		
	{\textbf{Keywords:}
			q-Cosymplectic manifold, Hamilton-Jacobi Theory, Hamiltonian system, Reeb vector field.}

\section{Introduction}

The interplay between geometry and dynamics has been a central theme in mathematics and physics since the development of Hamiltonian mechanics. Classical symplectic geometry provides the natural framework for time-independent Hamiltonian systems, while cosymplectic geometry {\cite{Berceanu,Chinea,Fino,Li,Rovenski,Shafiee}} extends this setting to incorporate dynamics with a distinguished time variable. However, many modern applications require the analysis of systems evolving on multiple timescales or depending on several independent ``time-like'' variables. To accommodate such structures, the concept of a \emph{$q$-cosymplectic manifold} has been introduced as a higher-codimensional generalization of cosymplectic geometry \cite{LeokSardonZhao2025,deLeonBajo2025}.  

A $q$-cosymplectic structure endows a smooth manifold $M^{2n+q}$ with a closed two-form $\Omega$, together with a family of $q$ closed one-forms $(\lambda_1,\ldots,\lambda_q)$, and a splitting of the tangent bundle into a Reeb distribution and a horizontal distribution. This framework naturally produces commuting Reeb vector fields $(R_1,\ldots,R_q)$, which generate flows along independent temporal directions. As in the classical case, the horizontal distribution inherits a symplectic structure from $\Omega$, thereby allowing Hamiltonian dynamics to be formulated in the multi-time setting.  

Generalizations of classical integrability theory, such as Liouville’s theorem and the Arnold–Liouville torus action~\cite{Liouville, Arnold2}, have been extended to the setting of $q$-cosymplectic systems. In particular, recent results establish the existence of Liouville coordinates and invariant tori under the presence of commuting strict $q$-cosymplectic vector fields, thereby providing a natural notion of integrability within this framework.{In addition, the concept of Lie integrability \cite{Azuaje,Carinena,Kozlov} has been generalized to \( q \)-evolution vector field on $q$-cosymplectic manifold.}
 Furthermore, the Hamilton–Jacobi theory {\cite{Esen,Esen2,Ferraro,Leon,Ohsawa}}, which plays a crucial role in both the geometric integration of dynamical systems and in control theory, admits a natural extension to the $q$-cosymplectic context. This extension can be formulated either intrinsically on the $q$-cosymplectic manifold, or via symplectification into a canonical cotangent bundle \cite{adjoint2}.  

Beyond their intrinsic geometric interest, $q$-cosymplectic structures arise naturally in models of physical and biological systems with multiple temporal scales. For instance, FitzHugh–Nagumo type neuron models exhibit fast, intermediate, and slow dynamics, which can be encoded geometrically within a 3-cosymplectic structure \cite{fitz1,fitz2,fitz3,fitz4}. This allows the use of Hamiltonian and Hamilton–Jacobi methods to analyze the rich multi-scale behavior of such systems, including reductions, invariant manifolds, and slow-fast bifurcations.  

The purpose of this paper is to develop a unified framework for \emph{dynamics and integration on $q$-cosymplectic manifolds}. After reviewing the geometric foundations of $q$-cosymplectic structures, we establish results on integrable systems and Liouville coordinates {as well as Lie integrability} in this setting. We then formulate the Hamilton–Jacobi theory for multi-time Hamiltonian systems, both intrinsically and through symplectification. Finally, we illustrate the theory by constructing a $q$-cosymplectic Hamiltonian description of an extended FitzHugh–Nagumo system, which provides a biologically meaningful application involving three distinct timescales.

    \section{Geometric Fundamentals}

Let \( n, q \) be positive integers, and let \( M \) be a smooth manifold of dimension \( 2n + q \). A \( q \)-cosymplectic structure on \( M \) is a collection consisting of a closed 2-form \( \Omega \) and a tuple of \( q \) (pointwise) linearly independent, non-vanishing, closed 1-forms \( \vec{\lambda} = (\lambda_1, \dots, \lambda_q) \), together with a splitting
$TM = \mathcal{R} \oplus \xi$ of the tangent bundle, satisfying the following conditions:

\begin{itemize}
    \item[(i)] \( \xi := \bigcap_{i=1}^q \ker \lambda_i; \)
    
    \item[(ii)] There exists a unique collection of linearly independent vector fields \( R_1, \dots, R_q \), tangent to \( \mathcal{R} \), satisfying the relations
    \[
    \lambda_i(R_j) = \delta_{ij}, \quad \text{for all } i,j = 1, \dots, q;
    \]
    
    \item[(iii)] \( \mathcal{R} = \mathrm{span}\{ R_1, \dots, R_q \}; \)
    
    \item[(iv)] \( \ker \Omega = \mathcal{R} \), and the restriction \( \Omega|_\xi \) is non-degenerate.
\end{itemize}

The vector fields \( R_i \), \( i = 1, \dots, q \), are called the \emph{Reeb vector fields}, and the forms \( \lambda_i \), \( i = 1, \dots, q \), are called the \emph{cosymplectic forms}.  The linear independence of the \( \lambda_i \) implies that \( \xi \) has constant rank \( 2n \). Therefore, by condition (iv), \( (\xi, \Omega) \) defines a symplectic vector bundle over \( M \).
It is easy to prove that if the vector fields \( R_1, \dots, R_q \) satisfy conditions \emph{(ii)} and \emph{(iii)} {\cite{Leok}}, then
\[
[R_i, R_j] = 0, \quad \text{for all } i,j = 1, \dots, q.
\]
The most straightforward example of $q$-cosymplectic manifold is the product
$M = T_1 \times \cdots \times T_q \times P,$ where \( T_i \), \( i = 1, \dots, q \), are one-dimensional manifolds, and \( (P, \Omega) \) is a symplectic manifold. Defining the projections \( \pi_{T_i} : M \rightarrow T_i \) and \( \pi_P : M \rightarrow P \), the symplectic form \( \Omega \) on \( P \) induces a closed 2-form on \( M \) given by $\Omega_P := \pi_P^* \Omega.$ Similarly, a non-vanishing 1-form \( \eta_i \) on \( T_i \) induces a closed 1-form \( \eta_{iT} := \pi_{T_i}^* \eta_i \) on \( M \). Then, $M = T_1 \times \cdots \times T_q \times P, \; \Omega_P, \; \vec{T} = (\eta_{1T}, \dots, \eta_{qT})$ defines a \( q \)-cosymplectic manifold.

A manifold endowed with such a structure is called a \emph{\( q \)-cosymplectic manifold} and is denoted by \( (M, \Omega, \vec{\lambda}) \), or simply by \( M \) when the context is clear. We refer to the collection \( \{ \lambda_i \} \) as an \emph{adapted coframe} for the \( q \)-cosymplectic structure, and the \( q \)-form $\lambda := \lambda_1 \wedge \cdots \wedge \lambda_q \neq 0$ is called the \emph{characteristic form}. The bundles \( \mathcal{R} \) and \( \xi \) are called the \emph{Reeb distribution} and the \emph{\( q \)-cosymplectic distribution}, respectively. 

Now, let  \( X \) be a vector field on a $q$-cosymplectic manifold  \( (M, \Omega, \vec{\lambda}) \), with associated flow \( \psi_t \), for all \( t \in \mathbb{R} \).
We say that  \( X \) is a \emph{\( q \)-cosymplectic vector field} (or \emph{infinitesimal automorphism of \( \xi \)}) if \( T\psi_t(\xi) = \xi \) for all \( t \), and we call \( X \) a \emph{strict \( q \)-cosymplectic vector field} or \emph{infinitesimal automorphism of \( \lambda_1 \wedge \cdots \wedge \lambda_q \)} if $\psi_t^*(\lambda_1 \wedge \cdots \wedge \lambda_q) = \lambda_1 \wedge \cdots \wedge \lambda_q, \forall t.$
    
The most simple example of a  \( q \)-cosymplectic structure on \( \mathbb{R}^{2n+q} \) with coordinates 
\[
(x_1, y_1, \dots, x_n, y_n, z_1, \dots, z_q),
\]
defined by
\[
\lambda_i := dz_i, \quad \Omega = \sum_{j=1}^n x_j \, dy_j,
\]
and
\[
\mathcal{R} = \mathrm{span} \left\{ \frac{\partial}{\partial z_1}, \dots, \frac{\partial}{\partial z_q} \right\}.
\]
The \( q \)-cosymplectic distribution is given by
\[
\xi = \mathrm{span} \left\{ \frac{\partial}{\partial x_1}, \frac{\partial}{\partial y_1}, \dots, \frac{\partial}{\partial x_n}, \frac{\partial}{\partial y_n} \right\}.
\]
We observe that the vector fields \( \frac{\partial}{\partial z_i} \), \( i = 1, \dots, q \), are Reeb vector fields. Moreover, they are strict \( q \)-cosymplectic vector fields, since
\[
\mathcal{L}_{\frac{\partial}{\partial z_i}} \lambda_j = 0, \quad \forall i,j = 1, \dots, q.
\]

A Bogoyavlenskii type theorem on a $q$-cosymplectic manifold \( M \) guarantees that given a submersion \( F = (F_1, \dots, F_k) : M \rightarrow \mathbb{R}^k \) with compact and connected fibers, for some \( 1 \leq k < \dim M \), a collection of \( n = \dim M - k \) vector fields \( Y_1, \dots, Y_n \) that are everywhere linearly independent, pairwise commuting, and tangent to the fibers of \( F \), i.e.,
    \[
    [Y_i, Y_j] = 0, \quad \mathcal{L}_{Y_i} F_r = 0, \quad \forall i,j = 1, \dots, n, \; r = 1, \dots, k.
    \]

Then, the map \( F : M \rightarrow F(M) \subset \mathbb{R}^k \) defines a \( \mathbb{T}^n \)-bundle and any vector field \( X \) on \( M \) satisfying
    \[
    \mathcal{L}_X F_r = 0 \quad \text{and} \quad [X, Y_i] = 0, \quad \forall i = 1, \dots, n, \; r = 1, \dots, k,
    \]
    is conjugate to a constant vector field on \( \mathbb{T}^n \) via each bundle chart of \( F : M \rightarrow F(M) \).

So, the tuple \( (Y_1, \dots, Y_n, F_1, \dots, F_k) \) is called an \emph{integrable system of type \( (n, k) \)} on \( M \). Therefore, a system on a manifold \( M \) is called \emph{integrable} if there exists an integrable system \( (Y_1, \dots, Y_n, F_1, \dots, F_k) \) of some type \( (n, k) \) on \( M \) with \( Y_1 = Y \).

Note that the vector fields \( Y_1, \dots, Y_n \) are tangent to the fibers of \( F \). We say that the system \( (Y_1, \dots, Y_n, F_1, \dots, F_k) \) is \emph{regular} at a fiber \( L \) of \( F \) if
\[
Y_1 \wedge \cdots \wedge Y_n \neq 0 \quad \text{and} \quad dF_1 \wedge \cdots \wedge dF_k \neq 0 \quad \text{everywhere on } L.
\]
We say that the system is \emph{proper} if the map \( (F_1, \dots, F_k): M \rightarrow \mathbb{R}^k \) is a proper topological map (i.e., each level set is compact), and the system is regular on almost every fiber.

Given a smooth manifold \( M \), we define \emph{natural vector bundles} over \( M \) to be those vector bundles that can be constructed from the tangent bundle \( TM \), the cotangent bundle \( T^*M \), and the trivial line bundle \( \mathbb{R} \times M \) using operations such as direct sums, tensor products, and their iterates.

There exists a Liouville type theorem \cite{Liouville} for $q$-cosymplectic manifolds. Consider the integrable system \( (Y_1, \dots, Y_n, F_1, \dots, F_k) \) is regular at a compact level set \( L \) of \( F \). Then, in a tubular neighborhood \( \mathcal{U}(L) \), there exists, up to automorphisms of \( \mathbb{T}^n \), a unique free torus action
\[
\rho : \mathbb{T}^n \times \mathcal{U}(L) \rightarrow \mathcal{U}(L),
\]
which preserves the system (that is, the action preserves each \( Y_i \) and each \( F_j \)), and whose orbits are regular level sets of the system. In particular, \( L \) is diffeomorphic to \( \mathbb{T}^n \), and
\[
\mathcal{U}(L) \cong \mathbb{T}^n \times B^k,
\]
with periodic coordinates \( \theta_1 \, (\mathrm{mod}\, 1), \dots, \theta_n \, (\mathrm{mod}\, 1) \) on the torus \( \mathbb{T}^n \), and coordinates \( (z_1, \dots, z_k) \) on a \( k \)-dimensional ball \( B^k \), such that \( F_1, \dots, F_k \) depend only on the variables \( z_1, \dots, z_k \), and the vector fields \( Y_i \) are of the form
\[
Y_i = \sum_{j=1}^n a_{ij}(z_1, \dots, z_k) \, \frac{\partial}{\partial \theta_j}.
\]

A system of coordinates
\[
(\theta_1 \; (\mathrm{mod}\, 1), \dots, \theta_n \; (\mathrm{mod}\, 1), z_1, \dots, z_k)
\]
on \( \mathcal{U}(L) \cong \mathbb{T}^n \times B^k \), as given by the above theorem, is called a \emph{Liouville system of coordinates}. 

Due to the previous theorem, each \( n \)-dimensional compact level set \( L \) of an integrable system of type \( (n, k) \), on which the system is regular, is called a \emph{Liouville torus}, and the torus \( \mathbb{T}^n \)-action in a tubular neighborhood \( \mathcal{U}(L) \) of \( L \) that preserves the system is called the \emph{Liouville torus action}. Notice that this action is uniquely determined by the system, up to an automorphism of \( \mathbb{T}^n \).

Furthermore, we can recall the fundamental conservation property  \cite{Zung} for integrable $q$-cosymplectic systems. Consider the Liouville torus \( L \) of an integrable system \( (Y_1, \dots, Y_n, F_1, \dots, F_k) \) on a manifold \( M \), and let \( \mathcal{G} \in \Gamma(\otimes^h TM \otimes^k T^*M) \) be a tensor field on \( M \) that is preserved by all the vector fields of the system:
\[
\mathcal{L}_{Y_i} \mathcal{G} = 0, \quad \forall i = 1, \dots, n.
\]
Then the Liouville torus \( \mathbb{T}^n \)-action on a tubular neighborhood \( \mathcal{U}(L) \subset M \) also preserves \( \mathcal{G} \).

\begin{definition}
An integrable system \( (Y_1, \dots, Y_n, F_1, \dots, F_k) \) on a \( 2n + q \)-dimensional \( q \)-cosymplectic manifold \( (M, \vec{\lambda}, \mathcal{R} \oplus \xi) \) is called a \emph{\( q \)-cosymplectic integrable system} if the vector fields \( Y_1, \dots, Y_n \) are strict \( q \)-cosymplectic vector fields.
\end{definition}
To understand the local geometric structure of 
q-cosymplectic integrable systems, it is natural to examine the behavior of the system in a neighborhood of a Liouville torus. The following theorem shows that, in such a neighborhood, one can introduce a system of Liouville coordinates in which the structural forms $\lambda_i$ are invariant under the $\mathbb T^n$-action and admit a canonical expression in terms of angle and action variables.
\begin{theorem}
If \( (Y_1, \dots, Y_n, F_1, \dots, F_k) \) is a \( q \)-cosymplectic integrable system on a \( 2n + q \)-dimensional \( q \)-cosymplectic manifold \( (M, \Omega, \vec{\lambda}, \mathcal{R} \oplus \xi) \), then in a neighborhood \( \mathcal{U}(N) \cong \mathbb{T}^n \times B^k \) of any Liouville torus \( N \subset M \), the forms \( \lambda_i \) are \( \mathbb{T}^n \)-invariant under the Liouville coordinate system \( (\theta_i \; (\mathrm{mod}\; 1), z_j) \), and are expressed as
\[
\lambda_i = \sum_{j=1}^n a_{ij}(z) \, d\theta_j + \sum_{j=1}^k b_{ij}(z) \, dz_j, \quad i = 1, \dots, q.
\]
% and \( n \leq m + q = (\mathrm{dim} M + q)/2. \)

% Specially, if \( (M, \vec{\lambda}, \mathcal{R} \oplus \xi) \) is a uniform \( q \)-contact manifold (see Definition \ref{D3}), there is a Liouville coordinate system
% \[
% (\theta_1 \; (\mathrm{mod}\; 1), \dots, \theta_n \; (\mathrm{mod}\; 1), z_1, \dots, z_n, x_1)
% \]
\end{theorem}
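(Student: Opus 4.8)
The plan is to reduce the statement to the assertion that each $\lambda_i$ is invariant under the infinitesimal generators $\partial/\partial\theta_l$ of the Liouville torus action, and then to establish this invariance by combining the conservation property of the preceding paragraph with the closedness of the $\lambda_i$ and the compactness of $N$. First I would expand $\lambda_i = \sum_j a_{ij}\,d\theta_j + \sum_j b_{ij}\,dz_j$ in the Liouville chart and note that $\mathbb{T}^n$-invariance of $\lambda_i$, i.e.\ $\mathcal{L}_{\partial/\partial\theta_l}\lambda_i = 0$ for all $l$, is exactly the statement that the coefficients are $\theta$-independent. Since $\lambda_i$ is closed, Cartan's formula gives $\mathcal{L}_{\partial/\partial\theta_l}\lambda_i = d\,\iota_{\partial/\partial\theta_l}\lambda_i = d a_{il}$, so the whole theorem reduces to showing that $a_{il} = \lambda_i(\partial/\partial\theta_l)$ is constant on $\mathcal{U}(N)$.

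The second step is to feed the characteristic form into the conservation property. Because the $Y_j$ are strict $q$-cosymplectic, $\mathcal{L}_{Y_j}\lambda = 0$ for $\lambda = \lambda_1\wedge\cdots\wedge\lambda_q$, and $\lambda$ is a section of the natural bundle $\wedge^q T^*M$; hence the Liouville torus action preserves $\lambda$. Since $\lambda$ is decomposable with $\ker\lambda = \bigcap_i\ker\lambda_i = \xi$, the action also preserves $\xi$ and therefore its annihilator $\mathrm{Ann}(\xi) = \Span\{\lambda_1,\dots,\lambda_q\}$. Writing the action as $\rho_g^*\lambda_i = \sum_m C_{im}(g)\lambda_m$, preservation of the top form forces $\det C(g) = 1$, and differentiating along $\partial/\partial\theta_l$ returns $\mathcal{L}_{\partial/\partial\theta_l}\lambda_i \in \Span\{\lambda_m\}$; comparing with the first step shows $d a_{il}\in\mathrm{Ann}(\xi)$, i.e.\ each $a_{il}$ is constant along the leaves of $\xi$ (which are involutive because $d\lambda_i=0$).

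The hard part is upgrading ``$C(g)$ is unimodular and fixes the span'' to ``$C(g)=\mathrm{Id}$'', equivalently that $a_{il}$ is globally constant and not merely leafwise constant. Here I would use compactness of $N\cong\mathbb{T}^n$ together with de Rham periods: for each basic cycle $\gamma_l$ the period $\oint_{\gamma_l}\lambda_i$ is independent of $z$ (the fibres over the connected base are homologous) and is fixed by $\rho_g$ (torus translations are homotopic to the identity), while $\oint_{\gamma_l}\lambda_i$ is precisely the $\theta_l$-average of $a_{il}$. The remaining task is to show that the leafwise-constant function $a_{il}$ equals this average, i.e.\ that the exact, $\mathrm{Ann}(\xi)$-valued correction generated by the action vanishes; this is exactly where the rigidity underlying the conservation property — compactness of the orbits together with the tracelessness forced by $\det C = 1$ — must be invoked to rule out a nonconstant leafwise-constant drift. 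I expect this reconciliation of leafwise constancy with the global period constraint to be the principal obstacle, the rest being routine bookkeeping.

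Once $a_{il}$ is shown to be constant, the first step yields $\mathcal{L}_{\partial/\partial\theta_l}\lambda_i = 0$ for all $i,l$, so the $\lambda_i$ are $\mathbb{T}^n$-invariant and their coefficients depend only on $z$; this gives the claimed expression $\lambda_i = \sum_j a_{ij}(z)\,d\theta_j + \sum_j b_{ij}(z)\,dz_j$, and in fact $d\lambda_i=0$ then forces the $a_{ij}$ to be constants.
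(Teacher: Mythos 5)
Your overall strategy coincides with the paper's: both arguments feed the characteristic form $\lambda_1\wedge\cdots\wedge\lambda_q$ into the conservation property (the $Y_j$ are strict $q$-cosymplectic, so they preserve the wedge, hence so does the Liouville torus action), and both must then pass from invariance of the wedge to invariance of each individual factor $\lambda_i$. Your reduction of that passage, via Cartan's formula and closedness, to the constancy of $a_{il}=\lambda_i(\partial/\partial\theta_l)$ is correct, as are the two partial facts you extract: $da_{il}\in\mathrm{Ann}(\xi)$, so that $a_{il}$ is constant on the leaves of $\xi$ (which are indeed integral manifolds, since the $\lambda_i$ are closed), and the constancy of the periods $\oint_{\gamma_l}\lambda_i$, i.e.\ of the $\theta_l$-averages of $a_{il}$.

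The difficulty is that the proposal stops exactly where the content of the theorem lies. You write that ``the remaining task is to show that the leafwise-constant function $a_{il}$ equals this average'' and that this is ``exactly where the rigidity \ldots must be invoked,'' calling it ``the principal obstacle'' --- that is an announcement of the missing step, not a proof of it. Neither the unimodularity $\det C(g)=1$ nor the period constraint, as you have deployed them, excludes a function that is constant along $\xi$-leaves, varies in the Reeb directions, and still has the prescribed circle averages; no argument is given that such a drift cannot occur, and this implication (invariance of the wedge plus closedness of each factor $\Rightarrow$ invariance of each factor) is precisely what the theorem asserts beyond the conservation property. For what it is worth, the paper's own proof is even terser at this very point: after writing the invariant characteristic form with $z$-dependent coefficients it simply states ``this implies that each $\lambda_i$ must be of the form\ldots'' with no justification. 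So you have correctly located the crux that the paper glosses over, but locating it is not the same as closing it, and as written your proposal does not establish the theorem. (Your closing observation --- that once the stated form is obtained, $d\lambda_i=0$ forces the $a_{ij}$ to be genuine constants --- is correct and is a point the paper does not make.)
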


\begin{proof}
The characteristic form \( \lambda_1 \wedge \cdots \wedge \lambda_q \) is \( \mathbb{T}^n \)-invariant under the Liouville coordinate system \( (\theta_i \; (\mathrm{mod}\; 1), z_j) \). Therefore, the characteristic form must be of the form
\[
\lambda_1 \wedge \cdots \wedge \lambda_q = \sum_{\substack{h + l = q \\ 0 \leq h \leq n, \; 0 \leq l \leq k}} f_{h,l}(z) \, d\theta_{i_1} \wedge \cdots \wedge d\theta_{i_h} \wedge dz_{j_1} \wedge \cdots \wedge dz_{j_l},
\]
for some smooth functions \( f_{h,l}(z) \) depending only on \( z \)-variables. This implies that each \( \lambda_i \) must be of the form
\[
\lambda_i = \sum_{j=1}^n a_{ij}(z) \, d\theta_j + \sum_{j=1}^k b_{ij}(z) \, dz_j,
\]
for some functions \( a_{ij}(z), b_{ij}(z) \), which confirms that each \( \lambda_i \) is \( \mathbb{T}^n \)-invariant.
%
% We must have \( n \leq m + q \), otherwise \( \lambda_1 \wedge \cdots \wedge \lambda_q \wedge (d\lambda_i)^m = 0 \) for \( i = 1, \dots, q \), because it would not contain the component \( (d\theta_1 \wedge \cdots \wedge d\theta_n) \wedge \cdots \).
\end{proof}

\section*{Integration on $q$-cosymplectic manifolds}

Let $(M, \Omega, \lambda_1, \dots, \lambda_q)$ be a {q-cosymplectic manifold}. The manifold $M$ is the {phase space} of dimension $2n+q$, and $N$ is the {configuration space} of dimension $n+q$. The projection $\pi: M \to N$ is the canonical projection onto $(x^j, t^i)$ coordinates. For example, in local coordinates $(x^j, p_j, t^i)$, $M$ is the total space of momenta $p_j$, while $N$ contains only the positions and time parameters. Given a Hamiltonian function $H \in C^\infty(M)$, the Hamiltonian vector field $X_H$ is defined by:
\[
i_{X_H}\Omega = dH - \sum_{i=1}^q R_i(H)\lambda_i, \quad i_{X_H}\lambda_i = 0.
\]

The {q-evolution vector field} is:
\[
E_H = \sum_{i=1}^q R_i + X_H.
\]

On a \( q \)-cosymplectic \( (M, \Omega, \vec{\lambda}) \) we can define the bracket
\[
\{ \cdot, \cdot \} : C^\infty(M) \times C^\infty(M) \to C^\infty(M), \quad (f, g) \mapsto \{f, g\} := \Omega(X_f, X_g),
\]
it is easy to prove this bracket is a Poisson bracket and 
further we can verify  for any Hamiltonian vector field $X_f$
\[
i_{X_f} \lambda_i = 0, \quad i = 1, \dots, q.
\]
Indeed, by the definition of the map \( b \), we have
\[
b(X_f) = i_{X_f} \Omega + \sum_{i=1}^q \lambda_i(X_f) \lambda_i = df - \sum_{i=1}^q (R_i f) \lambda_i.
\]
Therefore, for any \( R_j \), it follows that
\begin{align*}
\lambda_j(X_f) 
&= i_{R_j} i_{X_f} \Omega + i_{R_j} \left( \sum_{i=1}^q \lambda_i(X_f) \lambda_i \right) \\
&= i_{R_j} df - i_{R_j} \left( \sum_{i=1}^q (R_i f) \lambda_i \right) 
= R_j(f) - R_j(f) = 0.
\end{align*}
Hence, we can get 
\begin{align*}
    \{f,g\}=\Omega(X_f,X_g)=i_{X_g}i_{X_f}\Omega=i_{X_g}\left(df-\sum_{i=1}^qR_i(f)\lambda_i\right)=X_g(f).
\end{align*}

Now, let \(\mathfrak{g}\) be a solvable Lie algebra, i.e., for \(\mathfrak{g}\)  there exists a chain of Lie subalgebras, let us say $L_1,...,L_n$, such that
$\{0\} = L_0 \subset L_1 \subset L_2 \subset \cdots \subset L_{n-1} \subset L_n = \mathfrak{g}, \quad \text{with } n = \dim(\mathfrak{g}),$ and that each \(L_i\) is an ideal in \(L_{i+1}\) with codimension one, for \(i = 0,1,\dots,n-1\). We briefly recall here the Lie integrability theorem on a general manifold  \cite{Arnold2}. For \(X_1, \dots, X_n\) linearly independent smooth vector fields on $n$ dimensional manifold \(M\), generating a solvable Lie algebra \(\mathfrak{g}\), if one of the vector fields (denoted \(X_1\)) defines a dynamical system and that all vector fields are symmetries of \(X_1\), i.e., 
	\[
	[X_1, X_i] = 0, \quad \text{for } i = 2, \dots, n.
	\]
	Then the dynamical system defined by $X_1$ can be solved by quadratures. Generalizing this theorem to $q$-cosymplectic structures, we have the following.

\begin{theorem} \label{keji}
Consider the flow of the \( q \)-evolution vector field \( E_H = \sum_{i=1}^q R_i + X_H \) on a \( 2n+q \)-dimensional \( q \)-cosymplectic manifold \( (M, \Omega, \vec{\lambda}) \). Assume the following:

\begin{enumerate}
\item The vector field \( E_H \) admits \( m \) independent first integrals \( f_1, \dots, f_m \) and some constants $c_{ij}^k$ such that 
\[
\{f_i, f_j\} =\sum_{k=1}^r c_{ij}^kf_k, \quad  \{f_i,f_l\}=0, \;\text{for all}\;i,j=1,...,r < m,\;l=r+1,...,m
\]
and the condition \( 2n + q - 1 = m + r \) holds.

	\item The functions \(f_1, \ldots, f_r\) generate a solvable Lie algebra under the Poisson bracket;

\item The map \( F := (f_{r+1}, \dots, f_m) : M \to \mathbb{R}^{m - r} \) is a submersion whose fibers
\[
M_{\mathbf{c}} = \{x \in M \mid f_{r+1}(x) = c_1, \dots, f_m(x) = c_{m - r}\}, \quad \mathbf{c} \in \mathbb{R}^{m - r},
\]
are compact and connected.

	\item The structure constants satisfy
		\[
		\sum_{k=1}^rc_{ij}^k c_k = 0, \quad \text{for all } i, j = 1, \ldots, r;
		\]
\end{enumerate}

Then each fiber \( M_{\mathbf{c}} \) is diffeomorphic to a torus \( \mathbb{T}^{m - r} \), and the vector field \( E_H \) restricted to \(M_{\mathbf{c}}\) can be obtained by quadratures.
\end{theorem}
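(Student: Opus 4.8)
The plan is to convert the Poisson-bracket hypotheses into commutation relations among the associated vector fields, reduce the problem to a single regular fiber $M_{\mathbf c}$, and then combine a non-commutative (Mishchenko--Fomenko/Lie) integrability argument with the torus theorem and Lie's quadrature theorem recalled above.

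First I would extract the geometric content of the brackets. Every Hamiltonian field obeys $i_{X_f}\lambda_i=0$, so all the $X_{f_i}$ are sections of the symplectic subbundle $(\xi,\Omega|_\xi)$, on which $\{f,g\}=\Omega(X_f,X_g)$ is an honest symplectic Poisson bracket; consequently (up to the sign convention) $[X_{f_i},X_{f_j}]=-X_{\{f_i,f_j\}}=-\sum_{k=1}^{r}c_{ij}^{k}X_{f_k}$, so $X_{f_1},\dots,X_{f_r}$ span a Lie algebra of vector fields isomorphic to the solvable algebra $\mathfrak g$ of hypothesis~(2). The relations $\{f_i,f_l\}=0$ for $i\le r<l$ say exactly that each $X_{f_i}$ with $i\le r$ is tangent to the fibers $M_{\mathbf c}$ of $F$, while the base integrals $f_{r+1},\dots,f_m$ (which I will take to be in mutual involution, as a submersion onto a Poisson-trivial base requires) yield pairwise commuting fields $X_{f_{r+1}},\dots,X_{f_m}$ that are likewise tangent to $M_{\mathbf c}$. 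Finally, since the $f_i$ are first integrals one has $E_H(f_i)=0$, so $E_H$ is tangent to each $M_{\mathbf c}$; moreover the correction terms in $\mathcal L_{X_H}\Omega$ are proportional to the $\lambda_j$ and hence vanish on $\xi$, so the evolution flow acts symplectically on $\xi$ and $[E_H,X_{f_i}]=X_{E_H(f_i)}=0$. Thus $E_H$ is an infinitesimal symmetry of every $X_{f_i}$.

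Next I would assemble the torus on a fiber. The submersion hypothesis~(3) keeps $df_{r+1}\wedge\cdots\wedge df_m\neq0$ on $M_{\mathbf c}$, and together with the functional independence of the $f_i$ and the non-degeneracy of $\Omega|_\xi$ this makes the restricted fields pointwise independent; the dimension bookkeeping, governed by the balance $2n+q-1=m+r$, matches the commuting fields $X_{f_{r+1}},\dots,X_{f_m}$ to a frame of $TM_{\mathbf c}$, so they generate a locally free $\mathbb R^{m-r}$-action tangent to the fibers. Because $M_{\mathbf c}$ is compact the fields are complete and the action is transitive; invoking the Bogoyavlenskii/Liouville torus theorem recalled above, the isotropy group is a full-rank lattice and $M_{\mathbf c}\cong\mathbb T^{m-r}$, with the $f_{r+1},\dots,f_m$ as basic coordinates. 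The structure-constant condition $\sum_{k=1}^{r}c_{ij}^{k}c_k=0$ enters precisely here: it is the unimodularity-type constraint ensuring that the solvable directions preserve an invariant density on the compact fiber, so the non-abelian part of the symmetry action closes up into the torus rather than producing a non-compact orbit; the conservation property of Zung recalled above then guarantees that this torus action preserves $\Omega$, the $\lambda_i$ and the splitting.

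Finally I would perform the integration. On $M_{\mathbf c}$ the fields $X_{f_1},\dots,X_{f_r},X_{f_{r+1}},\dots,X_{f_m}$ span a solvable Lie algebra (the solvable tower $\{0\}=L_0\subset\cdots\subset L_n=\mathfrak g$ of hypothesis~(2), extended by the abelian algebra of the commuting base fields), and $E_H$ commutes with all of them; this is exactly the setting of Lie's integrability-by-quadratures theorem recalled before the statement, so the flow of $E_H|_{M_{\mathbf c}}$ is obtained by successively integrating the dual $1$-forms of an adapted frame, one ideal at a time. The condition $\sum_{k=1}^{r}c_{ij}^{k}c_k=0$ reappears to make the $1$-forms entering each step closed along the fiber, so that every quadrature is well defined and the procedure terminates. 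I expect the genuine obstacle to be the passage from the \emph{solvable}, non-abelian symmetry action to a bona fide torus: a solvable group acting on a compact manifold need not have compact orbits, so completeness, unimodularity and compactness must be combined carefully, and one must verify the everywhere-independence (regularity) of the constructed frame so that the torus theorem and the conservation property genuinely apply. The remaining items---translating brackets into commutators, checking tangency, and carrying out the quadratures---are then routine.
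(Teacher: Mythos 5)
Your argument for the quadrature half of the statement is essentially the paper's own proof: you translate the Poisson relations into commutators via $[X_{f_i},X_{f_j}]=-X_{\{f_i,f_j\}}$, use $[R_j,X_f]=X_{R_j(f)}$ to get $[E_H,X_{f_i}]=X_{E_H(f_i)}=0$, check tangency of the $X_{f_i}$ to $M_{\mathbf c}$ (this is where hypothesis~(4) enters, exactly as in the paper), conclude that $E_H,X_{f_1},\dots,X_{f_r}$ span a solvable Lie algebra of vector fields tangent to the fiber, and invoke the Lie integrability theorem recalled before the statement. On this part you match the paper step for step; if anything you are more careful, since the paper's proof ends by integrating $X_{f_1}$ on $M_{\mathbf c}$ rather than $E_H$ itself, whereas you keep $E_H$ as the field being integrated.

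Where you diverge is the torus claim, and there is a genuine gap there --- though it is a gap the paper does not close either, since its proof simply never addresses why $M_{\mathbf c}\cong\mathbb T^{m-r}$. Your route has two concrete problems. First, you assume the base integrals $f_{r+1},\dots,f_m$ are in mutual involution; the hypotheses only give $\{f_i,f_l\}=0$ for $i\le r<l$, not $\{f_l,f_{l'}\}=0$ for $l,l'>r$, so the commuting $\mathbb R^{m-r}$-action you want is not available without an extra assumption. Second, the dimension bookkeeping does not come out as you claim: the fiber of the submersion $F=(f_{r+1},\dots,f_m)$ has dimension $2n+q-(m-r)$, which under $2n+q-1=m+r$ equals $2r+1$, not $m-r$ in general; moreover every $X_{f_l}$ satisfies $i_{X_{f_l}}\lambda_i=0$ and so lies in the horizontal bundle $\xi$, hence these fields cannot frame a fiber whose tangent space contains Reeb directions. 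So the fields $X_{f_{r+1}},\dots,X_{f_m}$ cannot generate a transitive action on $M_{\mathbf c}$ by the argument you sketch, and the appeal to the Bogoyavlenskii/Liouville torus theorem does not go through as stated. Your own closing caveat correctly identifies the passage from the solvable symmetry action to a compact torus as the delicate point; that is precisely the step that remains unproved both in your proposal and in the paper.
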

Before showing the proof of this theorem, we present a brief review of the concept of functional independence and its consequences. Consider a smooth manifold $N,$ we say that the functions $F_1,...,F_k$ with $k\leq$ dim $N$ are functionally independent at the point $p\in N$ if the differential maps $dF_1|_p,dF_2|_p,...,dF_k|_p$ are linearly independent, which means that $p$ is a regular point of the differentiable function $F:N\rightarrow \mathbb R^k$ defined by $F=(F_1,...,F_k).$ Consider the set $N_{f}=\{x\in N:F_i(x)=c_i,c_i\in\mathbb R\},$ and suppose that $F_1,...,F_k$ are functionally independent, from the regular level set theorem \cite{Lee}, we have that $N_{f}$ is a smooth submanifold of $N$ of dimension dim $N-k.$
	\begin{proof}
	The functional independence of \(f_1, \ldots, f_m\) on \(M_{\mathbf{c}}\) implies that \(M_{\mathbf{c}}\) is a smooth submanifold of dimension \(2n + q - (m - r)\).

For any Hamiltonian functions \(f\) and \(g\), we have the identity
\[
X_{\{f, g\}} = [X_g, X_f].
\]
Applying this to our setting, we obtain
\[
X_{\{f_i, f_j\}} = [X_{f_j}, X_{f_i}], \quad \text{for all } i, j = 1, \ldots, r.
\]
Moreover, we know that for any function \(f \in C^\infty(M)\),
\begin{align}
[R_j, X_f] = X_{R_j(f)}, \quad j = 1, \dots, q.
\end{align}
Thus, we have
\begin{align*}
[E_H, X_{f}] &= \left[X_H + \sum_{j=1}^q R_j, X_{f} \right] = -X_{\{H, f\}} + \sum_{j=1}^q X_{R_j(f)} \\
&= X_{X_H(f)} + X_{\sum_{j=1}^q R_j(f)} = X_{E_H(f)} = 0.
\end{align*}
So, the vector fields \(E_H, X_{f_1}, \ldots, X_{f_r}\) generate a solvable Lie algebra under the Lie bracket of vector fields.

Next, observe that each \(X_{f_i}\), for \(i = 1, \ldots, r\), is tangent to \(M_{\mathbf{c}}\). Indeed, for \(i = 1, \ldots, r\) and \(j = r+1, \ldots, m\),
\begin{align}
(X_{f_i} f_j)\big|_{M_{\mathbf{c}}} = \{f_j, f_i\}\big|_{M_{\mathbf{c}}} = -\sum_{k=1}^r c_{ij}^k f_k\big|_{M_{\mathbf{c}}} = -\sum_{k=1}^r c_{ij}^k c_k = 0.
\end{align}

Therefore, the flow of \(X_{f_1}\) remains within \(M_{\mathbf{c}}\), and the dynamics restricted to \(M_{\mathbf{c}}\) are governed by the differential equation
\[
\dot{x} = X_{f_1}(x),
\]
where \((x^1, \ldots, x^k)\) are local coordinates on \(M_f\).
Since \(X_{f_1}\) belongs to a solvable Lie algebra generated by \(X_{f_1}, \ldots, X_{f_k}\), the system is solvable by quadratures. That is, its solutions can be obtained through successive integrations.

	\end{proof}
\section*{Hamilton-Jacobi Problem}
\noindent
Consider a {section} $\gamma: N \to M$ such that $\pi \circ \gamma = \text{id}_N$. We seek a vector field $X$ on $N$ satisfying:
\[
T\gamma \circ X = X_H \circ \gamma.
\]

This ensures that the integral curves of $X$ lift via $\gamma$ to integral curves of $X_H$. This means that the dynamics can be solved in a lower dimensional manifold and the complete dynamic can be reconstructed via the section $\gamma.$ We can try the following trick. We pull back the defining equation of $X_H$ via $\gamma$:
\[
\gamma^*(i_{X_H}\Omega) = \gamma^*\left( dH - \sum_{i=1}^q R_i(H)\lambda_i \right).
\]

Since $T\gamma \circ X = X_H \circ \gamma$, we have:
\[
\gamma^*(i_{X_H}\Omega) = i_X(\gamma^*\Omega).
\]

If $\gamma$ is {isotropic} (which is the condition we need to impose in this case), then $\gamma^*\Omega = 0$, so:
\[
0 = d(H \circ \gamma) - \sum_{i=1}^q (R_i(H) \circ \gamma)\, \gamma^*\lambda_i.
\]

That is,
\[
d(H \circ \gamma) = \sum_{i=1}^q (R_i(H) \circ \gamma)\, \gamma^*\lambda_i.
\]
This is the geometric Hamilton--Jacobi equation for multi-time systems with a compatible $q$-cosymplectic structure.

The necessary and sufficient condition for the projectability of the dynamics in this case is that the image of $\gamma$ is an isotropic submanifold of $M$.
We will say that $\text{Im}(\gamma) \subset M$ is isotropic if:
\[
\Omega|_{\text{Im}(\gamma)} = 0.
\]
That is, for all $p \in \text{Im}(\gamma)$ and all $v, w \in T_p \text{Im}(\gamma)$, $\Omega_p(v,w) = 0$.

In our context, $\gamma^*\Omega = 0$ means that $\text{Im}(\gamma)$ is isotropic. The 1-form $dH - \sum_{i=1}^q R_i(H)\,\lambda_i$ {annihilates all the tangent vectors in the image of} $\gamma$.
In other words, the dynamics of $X_H$ are tangent to the image of $\gamma$ (modulo the Reeb directions). This condition ensures that the integral curves of a vector field $X$ on $N$, when lifted via $\gamma$, become integral curves of $X_H$ on $M$.

\begin{remark}
     If \(\Omega\) is degenerate, this is referred to as {presymplectic orthogonality}. See that it is degenerate in the direct sum of $\xi \oplus \mathcal{R}$, since the Reebs are in the kernel of $\Omega$.

Now, let \(E = T(\operatorname{Im}(\gamma)) \subset \xi\), and assume \(\gamma^* \Omega = 0\), so \(E\) is isotropic. since \(\Omega|_\xi\) is symplectic, we can compute the orthogonal inside \(\xi\) as in symplectic geometry:
\[
E^\perp_\xi := \{ v \in \xi \mid \Omega(v, w) = 0 \ \text{for all } w \in E \}.
\]

Then, in the full tangent bundle \(TM = \mathcal{R} \oplus \xi\), the full \(\Omega\)-orthogonal complement is:
\[
E^\perp = E^\perp_\xi \oplus \mathcal{R}.
\]

This is because \(\Omega(v, w) = 0\) for all \(w \in E\) if:
\begin{itemize}
  \item \(v \in \xi\) and \(\Omega(v, w) = 0\), i.e., \(v \in E^\perp_\xi\), or
  \item \(v \in \mathcal{R}\), in which case \(\Omega(v, \cdot) = 0\) anyway.
\end{itemize}

So, combining these:
\[
E^\perp = E^\perp_\xi \oplus \mathcal{R}.
\]
\noindent
Now, since \(E \subset \xi\) is isotropic, we have \(E \subset E^\perp_\xi\). So:
\[
E^\perp_\xi = E + E^{\Omega, \mathrm{skew}},
\]
and therefore:
\[
E^\perp = E \oplus \mathcal{R} + \text{possibly more in } \xi.
\]

If the dimension of \(E\) is maximal isotropic (i.e., half the rank of \(\Omega|_\xi\)), then:
\[
E^\perp_\xi = E,
\quad \text{so} \quad
E^\perp = E \oplus \mathcal{R}.
\]
\noindent
So the lifted dynamics lie in the orthogonal complement, which turns out to be \(T(\operatorname{Im}(\gamma)) \oplus \mathcal{R}\) in the HJ theory.
\end{remark}

Visually, imagine that $\gamma$ draws a ``surface'' in $M$, parametrized by $N$. The vector field $X_H$ is compatible with moving along $\gamma(N)$, in the sense that its horizontal part is completely determined by the base dynamics.

The condition:
\[
\gamma^*\left( dH - \sum_{i=1}^q R_i(H)\lambda_i \right) = 0
\]
is the {mathematical expression of this compatibility}.
Let us prove this assertion using the commutative diagram defined by the projectability condition.

\subsubsection*{Hamilton-Jacobi formulation with commutative diagrams}

The projection of dynamics is represented by the following commutative diagram:

\[
\begin{tikzcd}
M \arrow{r}{X_H} & TM \\
N \arrow{u}{\gamma} \arrow{r}{X} & TN \arrow{u}{T\gamma}
\end{tikzcd}
\]

This expresses:
\[
T\gamma \circ X = X_H \circ \gamma.
\]

Let \( M \) be the phase space of dimension \( 2n + q \) with coordinates \( (x^j, p_j, t^i) \), and \( N \) the configuration space of dimension \( n + q \) with coordinates \( (x^j, t^i) \). The projection \( \pi: M \to N \) is given by:
\[
\pi(x^j, p_j, t^i) = (x^j, t^i).
\]

A section \( \gamma: N \to M \) satisfying \( \pi \circ \gamma = \text{id}_N \) is locally of the form:
\[
\gamma(x^j, t^i) = \left( x^j, \gamma_j(x, t), t^i \right),
\]
where \( \gamma_j \in C^\infty(N) \) are the momenta determined by the section. It is a geometric condition: the image of \( \gamma \) must be an \emph{isotropic submanifold} of \( (M, \Omega) \). In this case, since \( \Omega = dx^j \wedge dp_j \), the pullback becomes $\gamma^* \Omega = dx^j \wedge d\gamma_j$. So the condition \( \gamma^* \Omega = 0 \) means $dx^j \wedge d\gamma_j = 0,$ which is equivalent to requiring that the 1-form \( \gamma_j\, dx^j \) is closed $d(\gamma_j\, dx^j) = 0$.
(In classical terms, this corresponds to choosing a solution of the Hamilton--Jacobi equation coming from a \emph{generating function} \( S(x) \), such that $\gamma_j = {\partial S}/{\partial x^j}$. In this case, the 1-form \( \gamma_j\, dx^j \) is automatically exact (and hence closed), so the isotropy condition is satisfied.)

\begin{proposition}
Let \((M, \Omega)\) be a symplectic manifold with \(\Omega = dx^j \wedge dp_j\), and let \(N\) be a base manifold with coordinates \((x^j, t^i)\). Consider a section \(\gamma : N \to M\) of the projection \(\pi : M \to N\), locally given by
\[
\gamma(x^j, t^i) = \big(x^j, \gamma_j(x, t), t^i\big),
\]
where the functions \(\gamma_j \in C^\infty(N)\) specify the momenta. Then the condition that the image of \(\gamma\) is an isotropic submanifold of \((M, \Omega)\), i.e.,
\[
\gamma^* \Omega = 0,
\]
is equivalent to requiring:
\begin{align*}
\frac{\partial \gamma_j}{\partial x^k} &= \frac{\partial \gamma_k}{\partial x^j} \quad \text{for all } j,k, \\
\frac{\partial \gamma_j}{\partial t^l} &= 0 \quad \text{for all } j,l.
\end{align*}
Moreover, this implies that there exists a local function \(S(x)\) such that
\[
\gamma_j(x) = \frac{\partial S}{\partial x^j}.
\]
\end{proposition}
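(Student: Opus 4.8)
The plan is to compute the pullback $\gamma^*\Omega$ directly in the given coordinates and then read off the isotropy condition coefficient by coefficient. Since $\gamma$ acts on coordinates by $\gamma^* x^j = x^j$, $\gamma^* p_j = \gamma_j(x,t)$, and $\gamma^* t^i = t^i$, I would first record that $\gamma^*(dx^j) = dx^j$ while $\gamma^*(dp_j) = d\gamma_j = \tfrac{\partial \gamma_j}{\partial x^k}\,dx^k + \tfrac{\partial \gamma_j}{\partial t^l}\,dt^l$. Substituting into $\Omega = dx^j \wedge dp_j$ and using that pullback commutes with $d$ and $\wedge$ gives $\gamma^*\Omega = dx^j \wedge d\gamma_j$, which splits naturally into a purely spatial piece built from $dx^j \wedge dx^k$ and a mixed piece built from $dx^j \wedge dt^l$.

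The key step is to separate these two pieces. The $2$-forms $\{dx^j \wedge dx^k\}_{j<k}$ and $\{dx^j \wedge dt^l\}$ are linearly independent in $\Lambda^2 T^*N$, so the single equation $\gamma^*\Omega = 0$ forces each family of coefficients to vanish on its own. Antisymmetrizing the repeated-index sum in the spatial part rewrites it as $\sum_{j<k}\bigl(\tfrac{\partial \gamma_j}{\partial x^k} - \tfrac{\partial \gamma_k}{\partial x^j}\bigr)\,dx^j \wedge dx^k$, whose vanishing is exactly the symmetry condition $\partial \gamma_j/\partial x^k = \partial \gamma_k/\partial x^j$; the mixed part vanishes precisely when $\partial \gamma_j/\partial t^l = 0$ for all $j,l$. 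This establishes the claimed equivalence between $\gamma^*\Omega = 0$ and the two displayed systems.

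For the final assertion, the condition $\partial \gamma_j/\partial t^l = 0$ shows that each $\gamma_j$ depends only on the spatial coordinates $x$, so that the $1$-form $\alpha := \gamma_j\,dx^j$ descends to the $x$-space. The symmetry condition on the $x$-derivatives is then precisely the statement that $d\alpha = 0$, i.e.\ $\alpha$ is closed. By the Poincaré lemma, on any contractible coordinate neighborhood a closed $1$-form is exact, so there is a local potential $S(x)$ with $\alpha = dS$, equivalently $\gamma_j = \partial S/\partial x^j$, as required.

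I expect the computation itself to be routine; the only point needing care is the bookkeeping in separating the $dx\wedge dx$ and $dx\wedge dt$ components and correctly antisymmetrizing the summed index, so that the off-diagonal symmetry relation emerges rather than a weaker or redundant constraint. The passage to $S$ is standard but only local, which is why the statement produces a local function $S(x)$ and why the appeal to the Poincaré lemma (rather than a global argument) is the appropriate tool.
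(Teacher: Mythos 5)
Your proposal is correct and follows essentially the same route as the paper: both compute $\gamma^*\Omega = dx^j\wedge d\gamma_j$, expand $d\gamma_j$ into its $dx$- and $dt$-components, use linear independence of the basis $2$-forms to read off the symmetry condition on the $x$-derivatives and the vanishing of the $t$-derivatives, and then invoke the Poincar\'e lemma for the local potential $S(x)$. The only cosmetic difference is that the paper packages the computation as the closedness of the $1$-form $\theta=\gamma_j\,dx^j$, whereas you work directly with the coefficients of $\gamma^*\Omega$; the content is identical.
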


\begin{proof}
We compute the pullback of \(\Omega = dx^j \wedge dp_j\) under the section \(\gamma\). Since \(\gamma_j = p_j \circ \gamma\), we have:
\[
\gamma^* \Omega = dx^j \wedge d\gamma_j.
\]
Define the 1-form \(\theta = \gamma_j dx^j\) on \(N\). Then,
\[
d\theta = d(\gamma_j dx^j) = d\gamma_j \wedge dx^j.
\]
In local coordinates, we compute:
\[
d(\gamma_j dx^j) = \sum_{j,k} \frac{\partial \gamma_j}{\partial x^k} dx^k \wedge dx^j + \sum_{j,l} \frac{\partial \gamma_j}{\partial t^l} dt^l \wedge dx^j.
\]
Using antisymmetry of the wedge product, we group the terms:
\[
d(\gamma_j dx^j) = \sum_{j<k} \left( \frac{\partial \gamma_j}{\partial x^k} - \frac{\partial \gamma_k}{\partial x^j} \right) dx^k \wedge dx^j + \sum_{j,l} \frac{\partial \gamma_j}{\partial t^l} dt^l \wedge dx^j.
\]
Thus, the condition \(\gamma^* \Omega = 0\) (i.e., \(d\theta = 0\)) implies:
\[
\frac{\partial \gamma_j}{\partial x^k} = \frac{\partial \gamma_k}{\partial x^j}, \quad \frac{\partial \gamma_j}{\partial t^l} = 0.
\]
These are precisely the conditions for the 1-form \(\theta = \gamma_j dx^j\) to be closed and independent of \(t\). By the Poincaré lemma, locally there exists a function \(S(x)\) such that:
\[
\gamma_j(x) = \frac{\partial S}{\partial x^j}.
\]
\end{proof}

In this setting, the canonical forms are:
\[
\Omega = dx^j \wedge dp_j, \qquad \lambda_i = dt^i,
\]
and the Reeb vector fields are \( R_i = \frac{\partial}{\partial t^i} \). Given a Hamiltonian function \( H \in C^\infty(M) \), the Hamiltonian vector field \( X_H \) is defined by:
\[
i_{X_H} \Omega = dH - \sum_{i=1}^q R_i(H) \lambda_i, \qquad i_{X_H} \lambda_i = 0.
\]

Let us now write down the expression of \( X_H \) in local coordinates. Assume:
\[
X_H = a^j \frac{\partial}{\partial x^j} + b_j \frac{\partial}{\partial p_j} + c^i \frac{\partial}{\partial t^i}.
\]

Then:
\[
i_{X_H} \Omega = a^j dp_j - b_j dx^j, \quad i_{X_H} \lambda_i = c^i = 0.
\]
\noindent
Hence, from:
\[
dH - \sum_i R_i(H) \lambda_i = \frac{\partial H}{\partial x^j} dx^j + \frac{\partial H}{\partial p_j} dp_j,
\]
we find:
\[
a^j = \frac{\partial H}{\partial p_j}, \qquad b_j = -\frac{\partial H}{\partial x^j}, \qquad c^i = 0.
\]

So the Hamiltonian vector field is:
\[
X_H = \frac{\partial H}{\partial p_j} \frac{\partial}{\partial x^j} - \frac{\partial H}{\partial x^j} \frac{\partial}{\partial p_j}.
\]

We seek a vector field \( X \) on \( N \) such that:
\[
T\gamma \circ X = X_H \circ \gamma.
\]

Let:
\[
X = \frac{\partial H}{\partial p_j} \circ\gamma\frac{\partial}{\partial x^j}.
\]

Then:
\[
T\gamma\left( \frac{\partial}{\partial x^j} \right) = \frac{\partial}{\partial x^j} + \frac{\partial \gamma_k}{\partial x^j} \frac{\partial}{\partial p_k},
\]
so:
\[
T\gamma(X) = \frac{\partial H}{\partial p_j} \circ\gamma \left( \frac{\partial}{\partial x^j} + \frac{\partial \gamma_k}{\partial x^j} \frac{\partial}{\partial p_k} \right).
\]

On the other hand,
\[
X_H \circ \gamma = \left( \frac{\partial H}{\partial p_j} \circ \gamma \right) \frac{\partial}{\partial x^j} - \left( \frac{\partial H}{\partial x^j} \circ \gamma \right) \frac{\partial}{\partial p_j}.
\]

Matching components gives:
\[
\sum_j \left( \frac{\partial H}{\partial p_j} \circ \gamma \right) \frac{\partial \gamma_k}{\partial x^j} = -\left( \frac{\partial H}{\partial x^k} \circ \gamma \right),
\]
which is equivalent to:

\[
d(H \circ \gamma) = \sum_{j=1}^{q} \left( \frac{\partial H}{\partial t_j} \circ \gamma \right) dt_j.
\]

In fact, using the condition
\[
\gamma(x) = (\gamma_1(x), \ldots, \gamma_n(x)) \quad \text{with} \quad {\frac{\partial \gamma_j}{\partial x^k} = \frac{\partial \gamma_k}{\partial x^j},}
\]
we have
\begin{align*}
\left( \frac{\partial H}{\partial p_j} \circ \gamma \right) {\frac{\partial \gamma_k}{\partial x^j}}
&= { \left( \frac{\partial H}{\partial p_j} \circ \gamma \right) \frac{\partial \gamma_j}{\partial x^k}},
\end{align*}
which means that
\[
\sum_j \left( \frac{\partial H}{\partial p_j} \circ \gamma \right) \frac{\partial \gamma^j}{\partial x^k} + \left( \frac{\partial H}{\partial x^k} \circ \gamma \right) = 0.
\]
So,
\begin{align*}
d(H \circ \gamma) 
&= dH(x, \gamma(x), t) = dH(x, \gamma_1(x), \ldots, \gamma_n(x), t) \\
&= \sum_{j=1}^{n} \left( \frac{\partial H}{\partial x^j} \circ \gamma \right) dx^j + \sum_{i=1}^{n} \left( \frac{\partial H}{\partial p_i} \circ \gamma \right) \sum_{j=1}^{q} \frac{\partial \gamma^i}{\partial x^j} dx^j + \sum_{j=1}^{q} \left( \frac{\partial H}{\partial t_j} \circ \gamma \right) dt_j.
\end{align*}
{Hence}, we can get that
\[
d(H \circ \gamma) - \sum_{j=1}^{q} \left( \frac{\partial H}{\partial t_j} \circ \gamma \right) dt_j = {\sum_{j=1}^{q} \left(\sum_{i=1}^q \left( \frac{\partial H}{\partial p_i} \circ \gamma \right) \frac{\partial \gamma^i}{\partial x^j} + \left( \frac{\partial H}{\partial x^j} \circ \gamma \right) \right)} = 0.
\]

Since \( \Omega = dx^j \wedge dp_j \), we have:
\[
\gamma^* \Omega = dx^j \wedge d\gamma_j,
\]
and thus, if \( \gamma \) is isotropic, i.e., \( \gamma^* \Omega = 0 \), the differential 1-form \( d(H \circ \gamma) - \sum_i (R_i(H) \circ \gamma) \gamma^* \lambda_i \) must vanish. {Since} the pullback of the defining equation of \( X_H \) gives:
\[
\gamma^*(i_{X_H} \Omega) = d(H \circ \gamma) - \sum_{i=1}^q (R_i(H) \circ \gamma) \gamma^* \lambda_i,
\]
{if} \( \gamma^* \Omega = 0 \), then:
\begin{equation}\label{HJequationqco}
0 = \gamma^*(i_{X_H} \Omega) = d(H \circ \gamma) - \sum_{i=1}^q (R_i(H) \circ \gamma) \gamma^* \lambda_i.
\end{equation}

We now show how the geometric Hamilton--Jacobi equation on a $q$-cosymplectic manifold can be derived via symplectification. 

\subsubsection*{HJ on $q$-cosymplectic via symplectification}
Let $(M, \Omega, \lambda_1, \ldots, \lambda_q)$ be a $q$-cosymplectic manifold, with 
\[
M = T^*Q \times \mathbb{R}^q, \quad \Omega = dx^j \wedge dp_j, \quad \lambda_i = dt^i,
\]
where $Q$ is the configuration space with coordinates $x^j$, and $t^i$ represent $q$ independent time variables. The Reeb vector fields are given by $R_i = \partial / \partial t^i$.

To symplectify $M$, we consider the cotangent bundle of the extended configuration space:
\[
\widetilde{M} := T^*(Q \times \mathbb{R}^q) = T^*Q \times T^*\mathbb{R}^q,
\]
with coordinates $(x^j, p_j, t^i, \tau_i)$ and the canonical symplectic form
\[
\widetilde{\Omega} = dx^j \wedge dp_j + d\tau_i \wedge dt^i.
\]
We view $M$ as embedded in $\widetilde{M}$ via the inclusion $\iota : M \hookrightarrow \widetilde{M}$ given by $\tau_i = 0$. Let $\pi : \widetilde{M} \to M$ denote the projection map defined by
\[
\pi(x^j, p_j, t^i, \tau_i) = (x^j, p_j, t^i).
\]

Given a Hamiltonian function $H(x, p, t) \in C^\infty(M)$, the associated Hamiltonian vector field $X_H$ on $M$ is defined by:
\[
i_{X_H} \Omega = dH - \sum_{i=1}^q R_i(H) dt^i.
\]
This relation also holds on $\widetilde{M}$, but it can be rewritten in the symplectic framework as follows:
\begin{align}\label{EQi}
i_{X_H} \Omega + \sum_{i=1}^q R_i(H) dt^i = dH 
\quad \Longleftrightarrow \quad
i_{X_H} \Omega + i_{\sum_{i=1}^q R_i(H) \frac{\partial}{\partial \tau_i}} \left( \sum_{j=1}^q d\tau_j \wedge dt^j \right) = dH.
\end{align}
Define the vector field $\widetilde{X}_H$ on $\widetilde{M}$ by
\[
\widetilde{X}_H := X_H + \sum_{i=1}^q R_i(H) \frac{\partial}{\partial \tau_i}.
\]
Then the equation \eqref{EQi} becomes:
\[
i_{\widetilde{X}_H} \widetilde{\Omega} = dH,
\]
which shows that $\widetilde{X}_H$ is the Hamiltonian vector field associated to $H$ on $(\widetilde{M}, \widetilde{\Omega})$. Moreover, since
\[
T\pi \circ \widetilde{X}_H = X_H,
\]
we immediately obtain the following proposition:

\begin{proposition}
Each Hamiltonian vector field on a $q$-cosymplectic manifold can be lifted to a Hamiltonian vector field on the symplectification of that manifold.
\end{proposition}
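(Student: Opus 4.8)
The plan is to exhibit an explicit lift and then verify the two properties that make it a genuine lift: that it is globally Hamiltonian on the symplectification, and that it projects onto $X_H$. Working in the model $\widetilde{M} = T^*(Q\times\mathbb{R}^q)$ introduced above, with the conjugate momenta $\tau_i$ adjoined to the time coordinates $t^i$ and the canonical form $\widetilde{\Omega} = \Omega + \sum_i d\tau_i \wedge dt^i$, I would take as the candidate the vector field
\[
\widetilde{X}_H := X_H + \sum_{i=1}^q R_i(H)\,\frac{\partial}{\partial\tau_i}.
\]
The whole argument then rests on establishing the single interior-product identity $i_{\widetilde{X}_H}\widetilde{\Omega} = dH$.

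To prove this identity I would decompose the contraction according to the splitting $\widetilde{\Omega} = \Omega + \sum_i d\tau_i\wedge dt^i$ together with the splitting of $\widetilde{X}_H$. The key observation is that $X_H$ lives on $M$ and satisfies $i_{X_H}\lambda_i = i_{X_H}dt^i = 0$, so $X_H$ has neither a $\partial/\partial t^i$ nor a $\partial/\partial\tau_i$ component; consequently $i_{X_H}(d\tau_i\wedge dt^i)=0$ and $i_{\sum_i R_i(H)\partial/\partial\tau_i}\Omega = 0$. The only surviving cross term is $i_{\sum_i R_i(H)\partial/\partial\tau_i}\bigl(\sum_j d\tau_j\wedge dt^j\bigr) = \sum_i R_i(H)\,dt^i$. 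Adding this to $i_{X_H}\Omega = dH - \sum_i R_i(H)\,dt^i$, the two $\sum_i R_i(H)\,dt^i$ contributions cancel and one is left with exactly $dH$. This is precisely the reformulation already recorded in equation~\eqref{EQi}, so the verification amounts to reading off that identity.

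Having shown $i_{\widetilde{X}_H}\widetilde\Omega = dH$, and using that $\widetilde\Omega$ is nondegenerate, it follows that $\widetilde{X}_H$ is the Hamiltonian vector field of $H$ on the symplectic manifold $(\widetilde{M},\widetilde{\Omega})$. The second property, $T\pi\circ\widetilde{X}_H = X_H$, is then immediate from the coordinate expression: the projection $\pi$ forgets the $\tau_i$, while $\widetilde{X}_H$ differs from $X_H$ only by the $\partial/\partial\tau_i$ terms, which lie in $\ker T\pi$. Together these two facts say that $\widetilde{X}_H$ is a lift of $X_H$, which proves the proposition.

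I do not expect a serious computational obstacle; the content is essentially the bookkeeping of the contraction above. The one point deserving care is conceptual rather than computational: ensuring that $\widetilde{X}_H$ is well defined independently of the chart, and that a general $q$-cosymplectic manifold—not only the product model $T^*Q\times\mathbb{R}^q$—can be brought to this normal form. I would address this by invoking a Darboux-type theorem for $q$-cosymplectic structures, so that the local construction patches to a global lift, with the fact that $dH$ carries no $d\tau_i$ component guaranteeing that the $\tau_i$-components assemble coherently.
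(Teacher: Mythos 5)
Your proposal is correct and follows essentially the same route as the paper: the same candidate lift $\widetilde{X}_H = X_H + \sum_i R_i(H)\,\partial/\partial\tau_i$, the same verification of $i_{\widetilde{X}_H}\widetilde{\Omega} = dH$ via the cancellation of the $\sum_i R_i(H)\,dt^i$ terms (the paper's equation \eqref{EQi}), and the same observation that $T\pi\circ\widetilde{X}_H = X_H$. Your closing caveat about chart-independence is reasonable but not needed here, since the paper restricts this entire section to the product model $M = T^*Q\times\mathbb{R}^q$.
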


We now establish a connection between the Hamilton--Jacobi theory formulated on a $q$-cosymplectic manifold and its counterpart on the symplectified space.

Let $S(x^j, t^i)$ be a function on $Q \times \mathbb{R}^q$. Define a section $\widetilde{\gamma} : Q \times \mathbb{R}^q \to \widetilde{M}$ by:
\[
\widetilde{\gamma}(x, t) = \left( x^j, \frac{\partial S}{\partial x^j}, t^i, \frac{\partial S}{\partial t^i} \right).
\]
This section is Lagrangian because
\begin{align}\label{GG}
\widetilde{\gamma}^* \widetilde{\Omega} = d\left( \frac{\partial S}{\partial x^j} \right) \wedge dx^j + d\left( \frac{\partial S}{\partial t^i} \right) \wedge dt^i = d^2 S = 0
\end{align}
We also require a vector field $X \in \mathfrak{X}(Q \times \mathbb{R}^q)$ such that
\begin{align}\label{GG2}
T\widetilde{\gamma} \circ X = \widetilde{X}_H \circ \widetilde{\gamma}.
\end{align}
Under conditions \eqref{GG} and \eqref{GG2}, we obtain the multi-time Hamilton--Jacobi equation in the symplectic setting:
\[
H \circ \widetilde{\gamma} = \text{constant},
\]
which is equivalent to
\[
\widetilde{\gamma}^* dH = 0.
\]

We refer to this as the classical multi-time Hamilton--Jacobi equation.

Now, define the projected section:
\[
\gamma(x, t) := \pi \circ \widetilde{\gamma}(x, t) = \left( x^j, \frac{\partial S}{\partial x^j}, t^i \right) \in M.
\]
Using the section $\gamma$ induced from $\widetilde{\gamma}$, we aim to derive the Hamilton--Jacobi equation for $X_H$ on the $q$-cosymplectic manifold.

In our previous discussion, we identified two conditions for $\gamma$ to yield a solution to the Hamilton--Jacobi equation on a $q$-cosymplectic manifold:{
\begin{enumerate}
    \item $\gamma^* \Omega = 0$;
    \item There exists a vector field $Y \in \mathfrak{X}(Q \times \mathbb{R}^q)$ such that $T\gamma \circ Y = X_H \circ \gamma$.
\end{enumerate}}

Condition (ii) follows immediately from condition \eqref{GG2}. In fact, applying $T\pi$ to both sides of equation \eqref{GG2} gives:
\[
T\pi \circ (T\widetilde{\gamma} \circ X) = T\pi \circ (\widetilde{X}_H \circ \widetilde{\gamma})
\quad \Longleftrightarrow \quad
T(\pi \circ \widetilde{\gamma}) \circ X = (T\pi \circ \widetilde{X}_H) \circ (\pi \circ \widetilde{\gamma}).
\]
Since $T\pi \circ \widetilde{X}_H = X_H$ and $\gamma = \pi \circ \widetilde{\gamma}$, we obtain:
\[
T\gamma \circ X = X_H \circ \gamma,
\]
so we may simply take $Y := X$.

We now analyze when condition (i), i.e., $\gamma^* \Omega = 0$, holds.

From equation \eqref{GG}, we have:
\[
0 = \widetilde{\gamma}^* \widetilde{\Omega} 
= \widetilde{\gamma}^* \left( \Omega + \sum_{i=1}^q d\tau_i \wedge dt^i \right)
= \widetilde{\gamma}^* \Omega + \widetilde{\gamma}^* \left( \sum_{i=1}^q d\tau_i \wedge dt^i \right).
\]
Moreover, since $\Omega = \pi^* \Omega$, we find:
\[
0 = \gamma^* \Omega = (\pi \circ \widetilde{\gamma})^* \Omega = \widetilde{\gamma}^* (\pi^* \Omega) = \widetilde{\gamma}^* \Omega
\]
if and only if
\begin{align}\label{GG3}
0 = \widetilde{\gamma}^* \left( \sum_{i=1}^q d\tau_i \wedge dt^i \right)
= \sum_{i=1}^q d\left( \frac{\partial S}{\partial t^i} \right) \wedge dt^i. 
\end{align}
Equation \eqref{GG3} is equivalent to:
\[
\frac{\partial^2 S}{\partial t^i \partial x^j} = 0, \quad \text{for all } i = 1, \ldots, q,\; j = 1, \ldots, n.
\]

\begin{remark}\label{REM}
The condition
\[
\frac{\partial^2 S}{\partial t^i \partial x^j} = 0, \quad i = 1, \ldots, q,\; j = 1, \ldots, n,
\]
implies that each function $\, {\frac{\partial S}{\partial x^j}} \,$ depends only on the variables $x^1, \ldots, x^n$. This corresponds precisely to the fact that, in the case of a $q$-cosymplectic manifold, the section $\gamma$ depends only on the $x$-variables.
\end{remark}

\section{Application}

Consider the following system inspired by extended FitzHugh–Nagumo dynamics \cite{fitz1}

\[
\begin{aligned}
\varepsilon \dot{x} &= x - \frac{x^3}{3} - y, \\
\delta \dot{y} &= x + a - z, \\
\dot{z} &= b x - c z,
\end{aligned}
\]
\noindent
where \( 0 < \varepsilon \ll \delta \ll 1 \), and \( a, b, c \in \mathbb{R} \) are parameters. This system exhibits {three distinct time scales}:
\begin{itemize}
    \item The variable \( x \) evolves on the \emph{fast time scale} \( t_f = t / \varepsilon \),
    \item The variable \( y \) evolves on the \emph{intermediate time scale} \( t_i = t / \delta \),
    \item The variable \( z \) evolves on the \emph{slow time scale} \( t_s=t \).
\end{itemize}
\noindent
This classifies the variable speeds as $x \approx \mathcal{O}(1/\varepsilon)$ being the fast variable, $y$ is the intermediate fast variable $y \approx \mathcal{O}(1/\delta)$ and $z$ the slow scale with time scale $z \approx \mathcal{O}(1)$. 
These systems arise in models of bursting neurons, where \( x \) represents membrane potential (fast),\( y \) represents recovery/adaptation (intermediate) and  \( z \) models ultra-slow regulatory mechanisms (e.g., ion concentration).

 The neuron's membrane voltage ($x$) has an initial depolarization (the voltage increases from its resting value), followed by repolarization as the neuron relaxes back.  
    Because the parameters are below the spiking threshold, the trajectory does not reach the regime of sustained oscillations.  
    Instead, the membrane potential makes only a small excursion and then stabilizes, corresponding to a quiescent neuron at rest.
The depolarizing ionic current ($y$) counteracts the fast voltage change.  When the $x$ depolarization rises, $y$ also increases, acting as a braking force that helps repolarize the membrane.  This reflects the action of fast ionic mechanisms (such as sodium inactivation or fast potassium activation) that restore the membrane after excitation.
The variable $z$ corresponds to slow ion channels which adapt the excitability of the membrane over longer timescales.  
Finally, the generating can be thought of as a clock that measures progression along the neuronal trajectory.
From a neurophysiological pov, the curve shows how the membrane voltage and ionic currents evolve in time inside a single neuron. This is the basic electrical response unit that when connected in huge networks makes the brain’s computation possible.
If we tune parameters so the model oscillates, then the neuron would fire repetitive spikes, which is the language neurons use to communicate.
\noindent
\begin{remark}
The slow variable can affect the symplectic dynamics of the fast and intermediate ones. On the other hand, \( x \) and \( y \) can influence \( z \) only if the Hamiltonian depends on them in a way such that the Reeb field associated with $z$ picks up that dependence. In this way, there is a one-way geometric coupling: the slow variables shape the fast dynamics directly, while the fast ones can only modulate the slow dynamics indirectly, and never through the symplectic structure \( \Omega \), which is restricted to \( \xi \).
\end{remark}
%\textcolor{blue}{This example is currently three-dimensional, and there is no symplectic structure in three-dimensional space. Why does a symplectic structure $\Omega$ appear here?}
%{\textcolor{red}{I mean, we have a symplectic structure ow in the extended phase space when I construct the Hamiltonian, but it sounds deceitful so I am changing the paragraph to: The slow variable can affect the dynamics of the fast and intermediate ones. On the other hand, \( x \) and \( y \) can influence \( z \) only if the Hamiltonian depends on them in a way such that the Reeb field associated with $z$ picks up that dependence. In this way, there is a one-way geometric coupling: the slow variables shape the fast dynamics directly, while the fast ones can only modulate the slow dynamics indirectly}
In the following subsection we proceed to construct a Hamiltonian for the FitzHugh-Nagumo equations \cite{fitz2,fitz3,fitz4}.

\subsubsection*{Construction principle (adjoint/Pontryagin lift in physical time)}

In the theory of cotangent lifts of first--order dynamical systems, a system
\[
\dot{q} = F(q), \qquad q \in \mathbb{R}^n,
\]
admits an adjoint (Pontryagin) lift to the cotangent bundle \( T^\ast\mathbb{R}^n \),
endowed with the canonical symplectic form, generated by the Hamiltonian
\[
H(q,p) = p \cdot F(q)
= \sum_{j=1}^n p_j\,F_j(q).
\]
Hamilton's equations yield
\[
\dot q = \partial_p H = F(q), \qquad
\dot p = -\partial_q H = -(DF(q))^\top p,
\]
so that the original dynamics is recovered in the \emph{configuration} variables, while the \emph{adjoint} equations appear automatically \cite{adjoint,adjoint2}. 

In our case, we consider the extended phase space
\[
M = T^{*}\mathbb{R}^3 \times \mathbb{R}^3,\quad
(x,y,z,p_x,p_y,p_z,t_f,t_i,t_s),
\]
endowed with the canonical $2$--form
\[
\Omega = dx \wedge dp_x + dy \wedge dp_y + dz \wedge dp_z,
\]
together with three closed $1$--forms
\[
\lambda_f = dt_f,\quad \lambda_i = dt_i,\quad \lambda_s = dt_s.
\]
It's clear that  \( (M, \Omega, \lambda_s, \lambda_i, \lambda_f) \) is a 3-cosymplectic manifold, where \( \lambda_s = dt_s, \lambda_i = dt_i, \lambda_f = dt_f \), the Reeb vector fields are \( R_s = \partial_{t_s}, R_i = \partial_{t_i}, R_f = \partial_{t_f} \), and the horizontal and Reeb distributions are given separately by
\begin{align*}
    \xi &= \ker \lambda_f \cap \ker \lambda_i \cap \ker \lambda_s = T\mathbb{R}^3,\\
    \mathcal{R} &= \mathrm{span}\{R_s, R_i, R_f\} = \mathrm{span}\{\partial_{t_s}, \partial_{t_i}, \partial_{t_f}\}.
\end{align*}

Now,
We introduce the notation:
\begin{equation}
 \alpha_f=\frac{\dd t_f}{\dd t}=\frac{1}{\varepsilon},\qquad
        \alpha_i=\frac{\dd t_i}{\dd t}=\frac{1}{\delta},\qquad \alpha_s=\frac{dt_s}{dt}=1.
    \end{equation}
The \emph{horizontal Hamiltonian vector field}
$X_H\in\Gamma(\xi)$ determined by
\begin{equation}\label{eq:qHam}
\ii_{X_H}\Omega \;=\; \dd H - \sum_{\alpha\in\{f,i,s\}} (R_\alpha H)\,\lambda_\alpha,
\qquad
\lambda_\alpha(X_H)=0.
\end{equation}
Given \emph{time-scale coefficients} $\alpha_f,\alpha_i,\alpha_s\in C^\infty(M)$, the
\emph{$q$–evolution field} is
\begin{equation}\label{eq:qEvol}
E_H \;=\; \alpha_f R_f+\alpha_i R_i+\alpha_s R_s + X_H .
\end{equation}
Evolving by $E_H$ in the physical time $t$ encodes the (possibly distinct) clock rates 
$\dot t_f=\alpha_f$, $\dot t_i=\alpha_i$, $\dot t_s=\alpha_s$ (Poincaré time rescaling). For the extended FitzHugh--Nagumo system \cite{adjoint,adjoint2}
\[
\dot x = \frac{1}{\varepsilon}\, f(x,y) = \frac{1}{\varepsilon}\Big(x - \frac{x^3}{3} - y\Big), \qquad
\dot y = \frac{1}{\delta}\, g(x,z) = \frac{1}{\delta}(x + a - z), \qquad
\dot z = h(x,z) = b\, x - c\, z,
\]
the dynamics is generated on \(T^\ast \mathbb{R}^3\) by the \emph{Pontryagin Hamiltonian}

\begin{equation}\label{eq:HP}
\widetilde H(x,y,z,p) =
p_x \Big(\frac{1}{\varepsilon} f(x,y)\Big)
+ p_y \Big(\frac{1}{\delta} g(x,z)\Big)
+ p_z\, h(x,z).
\end{equation}

\subsubsection*{Hamilton's Equations}

In physical time \(t\), Hamilton's equations read
\[
\begin{aligned}
\dot x &= \partial_{p_x} \widetilde H = \frac{1}{\varepsilon} f(x,y),\\
\dot y &= \partial_{p_y} \widetilde H = \frac{1}{\delta} g(x,z),\\
\dot z &= \partial_{p_z} \widetilde H = h(x,z),\\
\dot p_x &= -\partial_x \widetilde H = -\frac{1-x^2}{\varepsilon} p_x - \frac{1}{\delta} p_y - b\, p_z,\\
\dot p_y &= -\partial_y \widetilde H = \frac{1}{\varepsilon} p_x,\\
\dot p_z &= -\partial_z \widetilde H = \frac{1}{\delta} p_y + c\, p_z.
\end{aligned}
\]

The corresponding \emph{clock rates} are encoded by the \(q\)--evolution field
\[
E_{\widetilde H} = \alpha_f R_f + \alpha_i R_i + \alpha_s R_s + X_{\widetilde H}, 
\qquad
\alpha_f = \frac{1}{\varepsilon}, \quad
\alpha_i = \frac{1}{\delta}, \quad
\alpha_s = 1.
\]

Accordingly, \(\dot t_f = 1/\varepsilon\), \(\dot t_i = 1/\delta\), and \(\dot t = 1\), which makes the multi-scale clocks explicit. The Reeb vector fields advance the clocks, while the variables \((x,y,z,p)\) evolve under the horizontal Hamiltonian vector field \(X_{\widetilde H}\).

\subsubsection*{Fast Layer Dynamics (\(\varepsilon \to 0\))}
\noindent
Introduce the fast time \(t_f = t / \varepsilon\), so that \(d/dt_f = \varepsilon\, d/dt\). To leading order, we have
\begin{align*}
x'   &= f(x,y) = x - \frac{x^3}{3} - y, \\
y'   &= \frac{\varepsilon}{\delta} g(x,z) 
     = \frac{\varepsilon}{\delta}(x + a - z) \approx 0, \\
z'   &= \varepsilon h(x,z) = \varepsilon(b\,x - c\,z) \approx 0, \\
p_x' &= -(1-x^2) p_x + O\!\left(\tfrac{\varepsilon}{\delta}, \varepsilon\right), \\
p_y' &= p_x, \\
p_z' &= \tfrac{\varepsilon}{\delta} p_y + \varepsilon c\, p_z \approx 0.
\end{align*}

where the prime (\('\)) denotes \(d/dt_f\).

Consider the critical manifold $\mathcal{C}_{\mathrm{crit}} = \{ f(x,y) = 0 \} = \{ y = \phi(x) := x - \tfrac{x^3}{3} \}$, which is a S-shaped curve in the $(x,y)$ plane. Projecting onto \(\mathcal{C}_{\mathrm{crit}}\), we can get  the reduced pair
\begin{equation}
\dot x = \frac{g(x,z)}{\delta (1-x^2)}, \qquad
\dot z = h(x,z), \qquad
y = \phi(x).
\end{equation}
In fact, \(\dot z = h(x,z)\) and \(y = \phi(x)\) are straightforward. We now explain why 
\(\dot x = \frac{g(x,z)}{\delta (1-x^2)}\).
Since \(y = \phi(x)\) on the critical manifold, the differential relationship along the manifold is
\[
\dot y = \phi'(x) \, \dot x,
\]
where \(\dot y\) denotes the derivative with respect to the original slow time \(t\), and 
\(\phi'(x) = 1 - x^2\).
On the other hand, the original system in slow time gives
\[
\dot y = \frac{1}{\delta} g(x,z).
\]
Combining with the chain rule expression
\[
\dot y = \frac{d}{dt} \phi(x) = \phi'(x) \, \dot x = (1 - x^2) \, \dot x,
\]
we obtain
\[
(1 - x^2) \, \dot x = \frac{1}{\delta} g(x,z) \quad \Rightarrow \quad
\dot x = \frac{g(x,z)}{\delta (1 - x^2)}.
\]

Hence, the reduced Hamiltonian on the reduced cotangent bundle \(T^\ast\{(x,z)\}\) is
\begin{equation}
\widetilde H^{(f)}(x,z; p_x, p_z) = p_x \Big( \frac{g(x,z)}{\delta (1-x^2)} \Big) + p_z\, h(x,z),
\end{equation}
with \(y = \phi(x)\). This can generate the fast-reduced flow  and produce a \(2\)--cosymplectic reduced system up to \(O(\varepsilon)\)  on the reduced cotangent bundle \(T^\ast\{(x,z)\}\).

\subsubsection*{Intermediate layer dynamics ($\delta\to0$)}

We start from the fast-reduced flow of the system
\begin{equation}
\dot x = \frac{g(x,z)}{\delta (1-x^2)}, \qquad
\dot{z} = h(x,z), \qquad
y = \varphi(x) := x - \frac{x^{3}}{3},
\end{equation} and 
we know the intermediate time scale
\[
t_{i} = \frac{t}{\delta}, \qquad R_{i} = \frac{\partial}{\partial t_{i}},
\]
where $R_{i}$ is the Reeb vector field corresponding to the intermediate clock.
At this scale, we have that 
$$z'=\delta h(x,z)\approx 0,\qquad (' = d/dt_{i})$$
thus,
the slow variable $z$ is frozen at leading order (quasi-static approximation).

We can also see that the equation for the fast variable becomes
\begin{equation}
x' = \frac{g(x,z)}{ 1-x^{2}},
\end{equation}
or equivalently in the original time $t$,
\begin{equation}
\dot{x} = \frac{g(x,z)}{\delta (1-x^2)}.
\end{equation}
Since $\delta \to 0$, the right-hand side is large, and $x$ rapidly relaxes to its critical point $x^{*}(z)$ defined by
\begin{equation}
g(x^{*},z) = 0,
\end{equation}
where
\[
g(x,z) = x + a - z,
\]
thus
\[
x^{*}(z) = z - a.
\]
Henceforth, we denote the intermediate layer by the critical manifold
\[
\mathcal{I}_{\mathrm{crit}}
= \bigl\{\, g(x,z) = 0 \,\bigr\}
= \bigl\{\, x = \varphi(z) := z - a \,\bigr\},
\]
whis is a straight line in the $(x,z)$ plane.

On the critical $\mathcal{C}_{\mathrm{crit}}$ we have $y = \phi(x)$, so
\[
y(t) = \phi(x(t)) \quad \Longrightarrow \quad
\dot{y} = \phi'(x)\, \dot{x}
= (1-x^{2})\, \dot{x}.
\]
However, the original system provides
\[
\dot{y} = \frac{1}{\delta}\, g(x,z).
\]
Combining both expressions on the critical manifold $\mathcal{I}_{\mathrm{crit}}\ni x = x^{*}(z)\neq  \pm 1$ yields
\[
(1-(x^{*})^{2})\, \dot{x}
= \frac{1}{\delta}\, g(x^{*},z)
\quad \Longrightarrow \quad
\dot{x} = 0,
\]
because $g(x^{*},z) = 0$. Hence, the fast variable $x$ is equilibrated on intermediate layer $\mathcal{I}_{\mathrm{crit}}$.

Now we can get that
the reduced Hamiltonian in the intermediate layer which depends only on $(z,p_{z})$:
\begin{equation}
\widetilde{H}^{(f_{i})}(z; p_{z}) = p_{z}\,\big((b-c)\,z - a b\big),
\end{equation}
with $y=\phi(\varphi(z))$ and $x=\varphi(z),$
which defines a 1--cosymplectic Hamiltonian system, as there remains only one Reeb vector field (the slow clock $R_{s}$) and one pair of phase space variables $(z,p_{z})$.

The reduced Hamilton's equations give
\begin{align*}
\dot{z} &= \frac{\partial \widetilde{H}^{(f_{i})}}{\partial p_{z}}
= (b-c)\,z - a b,\\
\dot{p_z}&=\frac{\partial \widetilde{H}^{(f_{i})}}{\partial z}=-cp_z,
\end{align*}
which coincides with the result obtained above, with an accuracy of
$
\mathcal{O}(\varepsilon + \delta).
$
The fast and intermediate variables follow as
\[
x(t) = \varphi (z(t)) = z(t) - a,
\qquad
y(t) = \phi(x(t))
= \phi\big(z(t)-a\big)=z(t)-a-\frac{(z(t)-a)^3}{3}.
\]
Thus, the slow flow on the intermediate layer determines the full trajectory through the slaving relations for $(x,y)$.

\subsection{Hamilton-Jacobi for the FitzHugh-Nagumo system}

We now formulate the Hamilton--Jacobi (HJ) problem in the $q$--cosymplectic setting for the extended FitzHugh--Nagumo neuron model. Recall the Pontryagin Hamiltonian
\[
\widetilde H(x,y,z,p) =
\frac{1}{\varepsilon}\, p_x\Big(x - \tfrac{x^3}{3} - y\Big)
+ \frac{1}{\delta}\, p_y\,(x+a-z)
+ p_z\Big(bx - cz\Big),
\]
where $(x,y,z)$ are the configuration variables, $(p_x,p_y,p_z)$ their conjugate momenta, and the small parameters $0<\varepsilon\ll\delta\ll 1$ separate the fast, intermediate and slow dynamics.

Let
\[
M = T^\ast\mathbb{R}^3 \times \mathbb{R}^3, \qquad 
(x,y,z,p_x,p_y,p_z,t_f,t_i,t_s),
\]
endowed with
\[
\Omega = dx \wedge dp_x + dy \wedge dp_y + dz \wedge dp_z, 
\qquad 
\lambda_f = dt_f, \;\; \lambda_i = dt_i, \;\; \lambda_s = dt_s.
\]
The Reeb vector fields are
\[
R_f = \frac{\partial}{\partial t_f}, \quad
R_i = \frac{\partial}{\partial t_i}, \quad
R_s = \frac{\partial}{\partial t_s}.
\]

The HJ problem seeks a section $\gamma:N \to M$, locally of the form
\[
\gamma(x,y,z,t_f,t_i,t_s) = \big(x,y,z,\gamma_x(x,y,z,t),\gamma_y(x,y,z,t),\gamma_z(x,y,z,t),t_f,t_i,t_s\big),
\]
such that
\[
T\gamma \circ X = X_H \circ \gamma.
\]

The isotropy condition $\gamma^\ast \Omega = 0$ implies the existence of a generating function $S$ with
\[
\gamma_x = \partial_x S, \qquad
\gamma_y = \partial_y S, \qquad
\gamma_z = \partial_z S.
\]

Biologically, it is natural to consider the threshold parameter $a$ as constant, while the conductance-related coefficients $b,c$ evolve on the slowest timescale. Thus,
\[
a = \mathrm{const}, \qquad
b = b(t_s), \qquad c = c(t_s).
\]
\noindent
The Pontryagin Hamiltonian becomes
\begin{equation}\label{Hamiltonian}
\widetilde H(x,y,z,p;t_s) =
\frac{1}{\varepsilon}\, p_x\Big(x - \tfrac{x^3}{3} - y\Big)
+ \frac{1}{\delta}\, p_y\,(x+a-z)
+ p_z\Big(b(t_s)\,x - c(t_s)\,z\Big).
\end{equation}

Since $\widetilde H$ now depends explicitly on $t_s$,
\[
R_s(\widetilde H) = \frac{\partial \widetilde H}{\partial t_s} \neq 0,
\]
while $R_f(\widetilde H)=R_i(\widetilde H)=0$. The Hamilton--Jacobi condition
\[
d(\widetilde H \circ \gamma) = \sum_{\alpha\in\{f,i,s\}} (R_\alpha \widetilde H\circ\gamma)\,\gamma^\ast\lambda_\alpha
\]
reduces to
\begin{equation}\label{HJgeom}
d(\widetilde H \circ \gamma) = (R_s \widetilde H\circ\gamma)\, dt_s.
\end{equation}

{Therefore, the Hamilton--Jacobi partial differential equation (HJ PDE) for the generating function \( S(x,y,z,t_s) \) is
\begin{equation}\label{HJnew}
\frac{1}{\varepsilon}\, S_x \Big(x - \tfrac{x^3}{3} - y\Big)
+ \frac{1}{\delta}\, S_y\,(x+a-z)
+ S_z\Big(b(t_s)\,x - c(t_s)\,z\Big)
+ S_{t_s} = 0,
\end{equation}
where the Hamiltonian is given by
\[
\widetilde{H}(x,y,z,p_x,p_y,p_z,t_s) 
= \frac{1}{\varepsilon} p_x \left(x - \frac{x^3}{3} - y\right) 
+ \frac{1}{\delta} p_y (x+a-z) 
+ p_z \big(b(t_s)x - c(t_s)z\big).
\]
We identify the momentum variables with derivatives of the generating function \( S \):
\[
p_x = S_x, \qquad p_y = S_y, \qquad p_z = S_z,
\]
where \( S_x = \frac{\partial S}{\partial x},\; S_y = \frac{\partial S}{\partial y},\; S_z = \frac{\partial S}{\partial z} \). Therefore,
\[
\widetilde{H} \circ \gamma = \widetilde{H}(x,y,z,S_x,S_y,S_z,t_s) 
= \frac{1}{\varepsilon} S_x \left(x - \frac{x^3}{3} - y\right) 
+ \frac{1}{\delta} S_y (x+a-z) 
+ S_z \big(b(t_s)x - c(t_s)z\big).
\]

According to Remark~\ref{REM}, we compute the differentials:
\[
dS_x = S_{xx}\,dx + S_{xy}\,dy + S_{xz}\,dz, \quad
dS_y = S_{yx}\,dx + S_{yy}\,dy + S_{yz}\,dz,
\]
\[
dS_z = S_{zx}\,dx + S_{zy}\,dy + S_{zz}\,dz, \quad
dS_{t_s} = S_{t_s t_s}\,dt_s = 0.
\]

We now compute the differential of each term in the Hamiltonian. First:
\[
d\left(\frac{1}{\varepsilon} S_x \left(x - \frac{x^3}{3} - y\right)\right),
\]
then:
\[
d\left(\frac{1}{\delta} S_y (x+a-z)\right),
\]
and for the third term:
\[
d\left(S_z (b x - c z)\right) = dS_z\,(b x - c z) + S_z\,d(b x - c z).
\]

If we denote by \( d' \) the differential with respect to the spatial variables \( x, y, z \), and \( dt_s \) the differential with respect to \( t_s \), then we can split:
\[
d\left(S_z (b x - c z)\right) = d'S_z\,(b x - c z) + S_z\,(b\,d'x - c\,d'z) + S_z\,(b' x - c' z)\,dt_s,
\]
where \( b' = \frac{db}{dt_s} \), \( c' = \frac{dc}{dt_s} \).

Thus, the full differential becomes:
\begin{footnotesize}
\[
d(\widetilde{H} \circ \gamma) =
d'\left(\frac{1}{\varepsilon} S_x \left(x - \frac{x^3}{3} - y\right)\right)
+ d'\left(\frac{1}{\delta} S_y (x+a-z)\right)
+ d'\left(S_z (b x - c z)\right)
+ S_z(b'x - c'z)\,dt_s.
\]
\end{footnotesize}

Now, the right-hand side of the geometric HJ equation~\eqref{HJgeom} is
\[
R_s \coloneqq \frac{\partial}{\partial t_s}, \qquad
\frac{\partial \widetilde{H}}{\partial t_s} = p_z\big(b'(t_s)x - c'(t_s)z\big),
\]
\[
(R_s \widetilde{H} \circ \gamma)\, dt_s
= S_z\big(b'(t_s)x - c'(t_s)z\big)\, dt_s.
\]

Therefore, equation~\eqref{HJgeom} reduces to
\[
d'\left(
\frac{1}{\varepsilon} S_x \left(x - \frac{x^3}{3} - y\right)
+ \frac{1}{\delta} S_y (x+a-z)
+ S_z (b x - c z)
\right) = 0.
\]

Integrating, we obtain a function of time on the right-hand side. Denoting it by \( F(t_s) \), we get
\begin{equation}\label{check}
\frac{1}{\varepsilon} S_x \left(x - \frac{x^3}{3} - y\right)
+ \frac{1}{\delta} S_y (x+a-z)
+ S_z (b x - c z) = F(t_s).
\end{equation}

This function \( F(t_s) \) could be associated with the time derivative term via
\[
F(t_s) = -\frac{\partial \widehat{S}}{\partial t_s},
\]
thus reproducing equation~\eqref{HJnew}, with \( \widehat{S} = S(x,y,z) - \int F(t_s)\,dt_s \).
}

Equation \eqref{HJnew} is the {biologically compatible Hamilton--Jacobi equation}, where the explicit $\partial_{t_s}S$ term reflects the slow modulation of the neuron by ion concentration or metabolic processes.

The general analytic solution for $S(x,y,z,t_s)$ does not exist in closed form 
because of the cubic nonlinearity and explicit time-dependence in $b,c$. 
The PDE is solved implicitly by the method of characteristics. The method of characteristics converts this PDE into the ODE system:
\[
\begin{cases}
\displaystyle \frac{dx}{ds} = \frac{1}{\varepsilon} \Big(x - \frac{x^3}{3} - y\Big),\\[1em]
\displaystyle \frac{dy}{ds} = \frac{1}{\delta} (x + a - z),\\[0.5em]
\displaystyle \frac{dz}{ds} = b(t_s)\, x - c(t_s)\, z,\\[0.5em]
\displaystyle \frac{dt_s}{ds} = E,\\[0.5em]
\displaystyle \frac{dS}{ds} = 0.
\end{cases}
\]

Here $s$ is the characteristic parameter along which $S$ is constant. The last equation implies that $S$ is constant along each characteristic curve: $S = \text{const}$.
\[
S(x,y,z,t_s) = F(\Gamma_1,\Gamma_2,\Gamma_3),
\]
where $\Gamma_i$ are constants along solutions of the characteristic ODE system.

To check the validity of the Hamilton--Jacobi equation, we design a program in MATLAB.

\subsubsection{Numerical Integration}

To validate the Hamilton--Jacobi equation we would have to
integrate the PDE \eqref{HJnew} for the action $S(x,y,z,t_s)$ directly on a numerical grid,
compute its spatial derivatives $(S_x,S_y,S_z)$, and then compare these gradients
against the adjoint variables $(p_x,p_y,p_z)$.  However, this approach faces severe difficulties. The HJ equation in our setting is a first-order nonlinear PDE posed in
  four variables $(x,y,z,t_s)$. A naive discretization requires a
  four-dimensional grid, which is computationally expensive and memory intensive. Furthermore, the PDE contains the nonlinear term $x - x^3/3 - y$ in the denominator when
  solved for $S_x$, which introduces singularities whenever
  $x - x^3/3 - y \approx 0$. This leads to stiffness in the numerical integration and requires extremely small time steps for stability. Even if $S$ is computed numerically, the quantities of interest are its
  derivatives $(S_x,S_y,S_z)$. Differentiation of noisy numerical solutions is
  highly unstable and amplifies discretization errors, especially in multiple
  spatial dimensions.

Because of these obstacles, a direct PDE integration of $S$ is not practical
for validation purposes. Instead, we exploit the fact that the adjoint
equations and the PDE are equivalent in the sense that the gradient
$(S_x,S_y,S_z)$ evolves under the same dynamics as $(p_x,p_y,p_z)$. By
prescribing initial conditions for $S_y,S_z,S_t$ %\textcolor{blue}{I am confused that $S_s=0,$ what's $S_t?$} {\color{orange}{It is an auxiliary parameter that is introduced for the method of characteristics. I'm putting a theoretical example for Burger's equation:

%Here’s the characteristic system for the inviscid Burgers equation
%\[
%u_t + u\,u_x = 0.
%\]

%Treat $(t,x,u)$ as variables along a curve $s\mapsto (t(s),x(s),u(s))$.
%The characteristic ODEs are
%\[
%\frac{dt}{ds}=1,\qquad
%\frac{dx}{ds}=u,\qquad
%\frac{du}{ds}=0.
%\]

%Taking $t$ as the parameter ($s=t$) gives the simpler form
%\[
%\frac{dx}{dt}=u,\qquad
%\frac{du}{dt}=0.
%\]

%So $u$ is constant along each characteristic, and characteristics are straight lines in $(t,x)$-space with slope $\frac{dx}{dt}=u$.

%With initial data $u(x,0)=u_0(\xi)$, the solution along the characteristic issuing from $x=\xi$ is
%\[
%u(x,t)=u_0(\xi),\qquad x=\xi + t\,u_0(\xi).
%\]

%} }

and reconstructing $S_x$
algebraically from the PDE constraint, we obtain a tractable validation
procedure that avoids the instability of full PDE integration.

For this approach, we consider the Hamiltonian \eqref{Hamiltonian} and the Hamilton--Jacobi equation \eqref{HJequationqco}. Hamilton’s equations yield the adjoint (momentum) system
\[
\begin{aligned}
\dot{p}_x &= -\tfrac{1}{\varepsilon}(1-x^2)\,p_x - \tfrac{1}{\delta}p_y - b\,p_z, \\
\dot{p}_y &= \tfrac{1}{\varepsilon} p_x, \\
\dot{p}_z &= \tfrac{1}{\delta} p_y + c\,p_z,
\end{aligned}
\]
coupled with the state equations
\[
\dot{x} = \tfrac{1}{\varepsilon}\Big(x - \tfrac{x^3}{3} - y\Big), 
\quad
\dot{y} = \tfrac{1}{\delta}(x+a-z),
\quad
\dot{z} = bx - cz.
\]

Instead of integrating the ODE for $S_x$, we reconstruct it directly from the Hamilton--Jacobi PDE.  
Given $S_y,S_z,S_{t_s}$ at a point $(x,y,z,t_s)$, we solve algebraically for $S_x$: \
\[
S_x = -\frac{\varepsilon}{\,x - \tfrac{x^3}{3} - y\,}
\left( \frac{1}{\delta} S_y(x+a-z) + S_z(bx-cz) + S_{t_s} \right).
\]

Thus the program only evolves $S_y$ and $S_z$ via
\[
\dot{S}_y = \tfrac{1}{\varepsilon} S_x, 
\qquad
\dot{S}_z = \tfrac{1}{\delta} S_y + c S_z,
\]
while $S_x$ is recomputed at every time step from the PDE constraint, since $S_{t_s}$ is constant and we can choose it zero for simplicity.

The coupled system of ODEs for$(x,y,z,\,p_x,p_y,p_z,\,S_x,S_y,S_z)$
is integrated with the MATLAB solver \texttt{ode45}. This is a variable-step, variable-order Runge--Kutta (4th/5th order Dormand--Prince pair) method.  

We used solver tolerances
\[
\texttt{RelTol} = 10^{-8}, \qquad \texttt{AbsTol} = 10^{-10},
\]
to ensure agreement between the adjoint variables $(p_x,p_y,p_z)$ and the PDE-constructed gradients $(S_x,S_y,S_z)$ to machine precision.  

The validation focuses on comparing $p_x$ (obtained from the adjoint ODE) and $S_x$ (reconstructed from the PDE).  
Since $p_y \equiv S_y$ and $p_z \equiv S_z$ by construction of the ODEs, only the $x$-component provides a non-trivial check.  

Two diagnostics are produced:
\begin{itemize}
  \item A semilogarithmic plot of $|p_x|$ (red solid) and $|S_x|$ (green dashed) versus time, showing exponential growth but perfect overlap.
  \item A ratio plot $p_x/S_x$, which remains identically equal to $1$ up to numerical roundoff errors. The program also computes the maximum deviation
  \[
  \max_t \, |\,p_x(t)/S_x(t) - 1\,|,
  \]
  which was observed to remain at the level of $10^{-11}$--$10^{-12}$ for typical runs.
\end{itemize}

\begin{figure}[h!]
    \centering
    \includegraphics[width=0.7\textwidth]{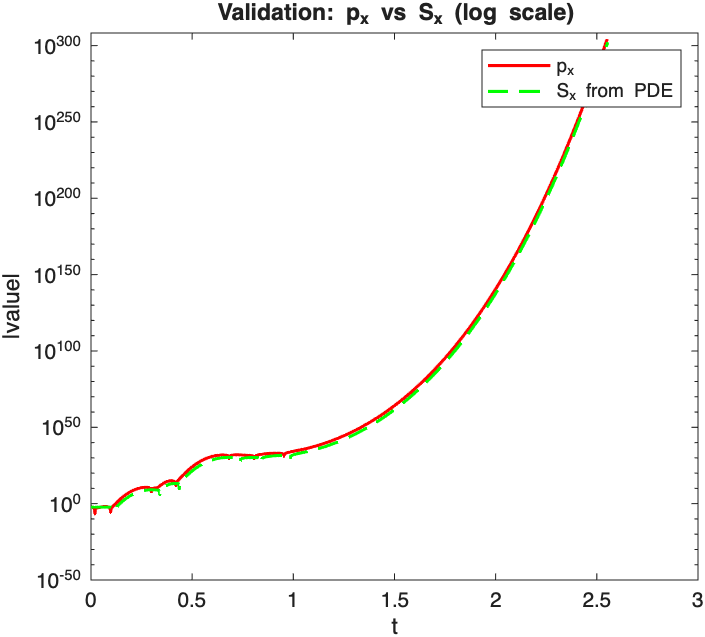}
    \caption{\textbf{Numerical validation of the Hamilton--Jacobi equation.}}
    \label{fig:validation}
\end{figure}

This numerical experiment confirms that the adjoint dynamics and the PDE formulation of the Hamilton--Jacobi equation are consistent: the momentum $p_x$ obtained from Hamilton’s equations coincides with the gradient $S_x$ reconstructed from the PDE constraint. The use of MATLAB’s \texttt{ode45} solver with strict tolerances ensures that deviations are only due to floating-point error.

\subsubsection{Linearization}
There are important situations where analytic solutions of the Hamilton--Jacobi equation can be obtained, as near equilibrium. Let $(x_e(t_s),y_e(t_s),z_e(t_s))$ be a {slowly drifting equilibrium} of the extended FitzHugh--Nagumo system at a fixed slow time $t_s$, i.e.
\begin{align*}
f(x_e,y_e) = x_e - &\frac{x_e^3}{3} - y_e = 0, \quad
g(x_e,z_e) = x_e + a - z_e = 0, \\
&h(x_e,z_e;t_s) = b(t_s)x_e - c(t_s) z_e = 0.
\end{align*}

Introduce shifted coordinates
\[
u =
\begin{pmatrix}
x - x_e(t_s) \\
y - y_e(t_s) \\
z - z_e(t_s)
\end{pmatrix}.
\]

Linearizing the vector field near this equilibrium gives
\[
\dot u = A(t_s) u, \qquad
A(t_s) =
\begin{pmatrix}
\frac{1}{\varepsilon}(1 - x_e(t_s)^2) & -\frac{1}{\varepsilon} & 0 \\
\frac{1}{\delta} & 0 & -\frac{1}{\delta} \\
b(t_s) & 0 & -c(t_s)
\end{pmatrix}.
\]

The linearized Hamilton--Jacobi equation in terms of $S(u,t_s)$ becomes

\[
\frac{1}{\varepsilon} (1 - x_e(t_s)^2) S_u^x \, u_x
- \frac{1}{\varepsilon} S_u^x \, u_y
+ \frac{1}{\delta} S_u^y \, u_x - \frac{1}{\delta} S_u^y \, u_z
+ b(t_s) S_u^z \, u_x - c(t_s) S_u^z \, u_z
+ S_{t_s} = 0,
\]
which is linear in the shifted coordinates $(u_x,u_y,u_z)$ with slowly varying coefficients. Since the coefficients depend smoothly on $t_s$, there exist local analytic solutions of the Hamilton--Jacobi PDE near the slowly drifting equilibrium, which can be expressed as power series in $u$ with coefficients depending on $t_s$.

We look for a generating function $S(u,t_s)$ in the form of a power series in $u$ up to second order:
\[
S(u,t_s) = \frac{1}{2} u^\top P(t_s) u + Q(t_s)^\top u + R(t_s),
\]
where $P(t_s) \in \mathbb{R}^{3\times 3}$ is symmetric, $Q(t_s) \in \mathbb{R}^3$, and $R(t_s) \in \mathbb{R}$.  

The derivatives are
\[
S_u = P(t_s) u + Q(t_s), \qquad
S_{t_s} = \frac{1}{2} u^\top \dot P(t_s) u + \dot Q(t_s)^\top u + \dot R(t_s),
\]
where $\dot P = dP/dt_s$, etc.  Substituting into the linearized HJ PDE yields equations for coefficients:

\begin{itemize}
    \item \textbf{Quadratic terms:} $u^\top$-dependent terms give a matrix ODE for $P(t_s)$:
    \[
    \dot P(t_s) + A(t_s)^\top P(t_s) + P(t_s) A(t_s) = 0.
    \]
    \item \textbf{Linear terms:} $u$-dependent terms give a vector ODE for $Q(t_s)$:
    \[
    \dot Q(t_s) + A(t_s)^\top Q(t_s) = 0.
    \]
    \item \textbf{Constant term:} $R(t_s)$ satisfies
    \[
    \dot R(t_s) = 0 \quad \text{(or determined by boundary condition).}
    \]
\end{itemize}

Hence, the local analytic solution of the nonautonomous Hamilton--Jacobi PDE near the slowly drifting equilibrium is
\[
S(x,y,z,t_s) \approx \frac{1}{2} 
\begin{pmatrix} x-x_e \\ y-y_e \\ z-z_e \end{pmatrix}^\top
P(t_s)
\begin{pmatrix} x-x_e \\ y-y_e \\ z-z_e \end{pmatrix}
+ Q(t_s)^\top 
\begin{pmatrix} x-x_e \\ y-y_e \\ z-z_e \end{pmatrix} + R(t_s),
\]
with $P(t_s)$, $Q(t_s)$, $R(t_s)$ evolving according to the above ODEs. This solution is analytic in $(x,y,z)$ near the equilibrium and encodes the slow-time modulation through $t_s$.

\section*{Acknowledgements}
Cristina Sardón acknowledges Programa Propio of Universidad Politécnica de Madrid
for the granting of financial support for research purposes at UCSD the summer of 2025
and gratefully acknowledges Professor Melvin Leok for hosting him at UC San Diego
and for his support during this work. Xuefeng Zhao gratefully acknowledges the support
from the National Natural Science Foundation of China (Grant No. 12401234).

\end{document}